\newtheorem{theorem}{Theorem}
\newtheorem{lemma}{Lemma}
\newtheorem{definition}{Definition}
\algnewcommand\Input{\item[\textbf{Input:}]}
\algnewcommand\Output{\item[\textbf{Output:}]}
\begin{document}

\begin{frontmatter}
\title{Influence-based Community Partition with Sandwich Method for Social Networks}
%\tnotetext[mytitlenote]{This work is supported by the National Natural Science Foundation of China (No.61772385, No.61572370).}
%\author{Elsevier\fnref{myfootnote}}
%\address{Radarweg 29, Amsterdam}
%\fntext[myfootnote]{Since 1880.}

%% or include affiliations in footnotes:
\author[mymainaddress,mysecondaryaddress]{Qiufen Ni}
\ead{niqiufen@whu.edu.cn}
\author[mythirdaddress]{Jianxiong Guo}
\ead{jianxiong.guo@utdallas.edu}
\author[mymainaddress,mysecondaryaddress]{Chuanhe Huang}
%\cortext[mycorrespondingauthor]{Corresponding author}
\ead{huangch@whu.edu.cn}
\author[mythirdaddress]{Weili Wu}
\ead{weiliwu@utdallas.edu}
\address[mymainaddress]{School of Computer Science, Wuhan University}
\address[mysecondaryaddress]{Collaborative Innovation Center of Geospatial Technology, Wuhan, China}
\address[mythirdaddress]{Department of Computer Science, The University of Texas at Dallas, Dallas, USA}
\begin{abstract}
Community partition is an important problem in many areas such as biology network, social network. 
The objective of this problem is to analyse the relationships among data via the network topology. In this paper, we consider the community partition problem under IC model in social networks.   We formulate the problem as a combinatorial optimization problem which aims at partitioning a given social network into disjoint $M$ communities. The objective is to maximize the sum of influence propagation of a social network through maximizing it within each community.  The existing work shows the influence maximization for community partition problem (IMCPP) to be NP hard. We first prove that the objective function of IMCPP under IC model  is neither submodular nor supermodular. Then both supermodular upper bound and submodular lower bound are constructed and proved so that the sandwich framework can be applied.  A continuous greedy algorithm and a discrete implementation are designed for upper bound and lower bound problems and the algorithm for both of the two problems gets a $1-1/e$ approximation ratio. We also devise a simply greedy to solve the original objective function and apply the sandwich approximation framework to it to guarantee a data dependent approximation factor. Finally, our algorithms are evaluated on two real data sets, which clearly verifies the effectiveness of our method in community partition problem, as well as the advantage of our method against the other methods.
\end{abstract}

\begin{keyword}
Community Partition \sep  Influence Maximization \sep Sandwich Approximation Framework \sep Social Networks
\end{keyword}

\end{frontmatter}

%\linenumbers

\section{Introduction}
In recent years, community detection has been intensively investigated in complex networks like social, biological and technological networks.  A community is a group where nodes are interconnected densely and connected to the nodes outside the community sparsely\cite{ghosh2008community,raghavan2007near}. Community can help us compress the complex huge network to a smaller network in which we can focus on solving problems in community level instead of node level. In social networks, community detection has a wide range of applications which facilitate the social computing tasks. For instance, community detection can help us understand user relationship and improve social recommendation, we can recommend customers products more efficiently since people in the same community have similar interest, which can improve the transaction success rate; community detection can also apply to friend recommendation based on that people in the same community have similar social circles, which can improve recommendation accuracy.

In social network, influence diffusion is an important topic and its main purpose is to find an effective and efficient way to propagate information through a social network. kempe {\em et al.}\cite{kempe2003maximizing} first model this problem as how to find an influential subset of seed users to maximize the spread of influence, which is named as influence maximization (IM) problem. They prove that this problem is NP-hard and propose a greedy algorithm to solve this problem. They also study the submodularity of this problem prove that their solution has a performance guarantee of $(1-1/e-\epsilon)$. In \cite{kempe2003maximizing}, two classical influence propagation models: linear threshold (LT) model and independent cascade (IC) mode are proposed. In the LT model, a node will be influenced when his active neighbours have reached a certain threshold, while in IC model, each seed node has a certain probability to influence his inactive neighbours. In this paper, we study the influence-based community detection problem in IC model.

Various community detection algorithms from different applications of specific needs have been proposed in social networks. Most of these approaches are concentrated on the network topological structure based on various criteria including density-based\cite{qi2014optimal,subramani2011density}, modularity-based\cite{zhuang2017dynamo}, betweenness\cite{rozario2019community}, normalized cut\cite{leskovec2010empirical}. But few works do influence-based community partition which aims at the influence propagation in social networks. Moreover, most of the existing influence-based community partition algorithms are heuristic, which have no theoretical guarantee. In this work, we investigate the community partition problem in social networks with sandwich theory and obtain a valid approximation guarantee for our problem.

We summarize the main contributions in this paper as follows:
\begin{itemize}
	\item We develop a new influence-based community partition method under the IC model. First we formulate the community partition problem (IMCPP) as partitioning a social network to $M$ disjoint communities and the goal is to maximize the influence propagation within each community.
	\item We prove the objective function of IMCPP is NP-hard, but not submodular and not supermodular.
	\item We get a supermodular upper bound and submodular lower bound for our IMCPP problem, and use the Lov{$\acute{a}$}sz extension theory to relax the upper bound function and the multilinear extension to relax the lower bound function, we introduce a partition matroid to the domain of the relaxed problems.
	\item We propose a continuous greedy algorithms and a discrete implementation method to solve the upper bound and lower bound problems respectively.
	\item We analyse the performance guarantee for the continuous greedy algorithms, and get approximation ratio $1-1/e$ for both the proposed algorithms.
	\item A simple greedy algorithm is proposed to solve the original IMCPP and a sandwich approximation framework is applied, which guarantee a data dependent approximation factor.
	\item We numerically validate the effectiveness of the proposed algorithm on real-world online social networks datasets. 
\end{itemize}

The result of the paper is organized as follows: 
In Section \ref{related-work} we begin by recalling some existing work. We introduce the network model and problem description in Section \ref{network-model}.  In Section \ref{bound}, we analyse the properties of the objective function for IMCPP, and construct a supermodular upper bound and a submodular lower bound for objective function. In Section \ref{solution}, we propose algorithms to solve the upper bound and lower bound problems and get approximation guarantees for both of them, In Section \ref{sandwich}, a simple greedy algorithm is presented to solve the original problem, then  the sandwich approximation framework is applied to get a theoretical guarantee for the objective function. We also give theoretical proof for the sandwich approximation algorithm and get a theoretical guarantee, and in Section \ref{experiments} the simulation results is presented, while finally, 
the conclusion is presented in Section \ref{conclusion}.
\section{Related Work}\label{related-work}
Community are also called group, cluster, cobesive subgroup or module in different contexts. As finding out communities is very useful in related social computing tasks, a number of approaches have been proposed in the past. These approaches can be summarized into four main categories: node-centric, group-centric, network-centric, hierarchy-centric. Lets introduce these methods and their related work. 1. Node-centric. Node centrality is to recognize which nodes are important among a large number of connected nodes and it provides some measures which define the importance of nodes. There are four classical and commonly used evaluation standards: (1). Degree centrality. The number of nodes adjacent to it determine the importance of a node. N. Gupta {\em et al.}\cite{gupta2016centrality} propose an immunization strategy which with the aid of the degree centrality to measure the local influence of a node, then it can get a global result as it ranks the degree of all the nodes in the network.
(2). Closeness centrality. It measures how close a node is to all the other nodes in the network by the geodesic distance of a node to all other nodes. A node can reach the remaining nodes more fast than other nodes is called the central node. M.K. Tarkowski {\em et al.}\cite{tarkowski2016closeness} consider the importance to measure the centrality of a bus stop since a bus stops (nodes) may belong to more than one bus line which often overlap. They build the first extension of closeness centrality to the network which has a community structure. They also propose a novel game theory solution which related to four game -theoretic variants of closeness centrality.
(3). Eigenvector centrality. It measures the importance of a node by the importance of his friend. M. Ditsworth {\em et al.}\cite{ditsworth2019community} propose a community detection method which leverage the relationship between eigenvector centrality and Katz centrality.
(4). Betweenness centrality. It measures the betweenness centrality of a node by counting the number of shortest paths in a network that will pass the node. High betweenness nodes is very important in network  communication. A. Bhandari {\em et al.}\cite{bhandari2017betweenness} present a algorithm to compute the betweenness centrality of a node by detecting the community in the network. The algorithm dynamically update the node' centrality when any node or edge is added to network or deleted from network. 
2. Group-centric. The group-centric criterion regards the connection within a group as a whole. Density-based group is based on this criterion. K. Yao {\em et al.}\cite{yao2019density} present a Density-based Geo-Community Detection (DGCD) algorithm to identify groups of people who have high social and spatial density in geo-social networks. 3. Network-centric. The network-centric community detection method partition the network into several disjoint sub-networks based on the global topology of the network. Two representative and most used methods based on the network-centric are spectral clustering and modularity maximization. Spectural clustering \cite{von2007tutorial} is derived from graph partition problem which aims to find out a minimal cut partition. L. Stephan {\em et al.}\cite{stephan2018robustness} study a random graph drawn problem with the stochastic block model which the nodes are partitioned into communities and edges are placed randomly and independently of one another. The placement probability of edges are determined by the communities that the two endpoints belong to. They introduce a new spectral method based on the distance matrix to recover the labels of communities which has better performance than random guess. 
Modularity is proposed by Newman {\em et al.} \cite{newman2006modularity}, it is used to measure the strength of a community partition for a network with the consideration of nodes' degree distribution.
J. Zhang {\em et al.}\cite{zhang2018sparse} study the community detection problem in the stochastic block model (SBM) or the degree-correlated SBM assumption and propose a modularity maximization problem which is sparse and low-rank completely positive relaxation. 4. Hierarchy-centric. Hierarchy-centric community detection constructs a hierarchical structure of communities based on network topology. T. Li {\em et al.}\cite{li2018hierarchical} consider that construct a framework based on recursive bi-partitioning for hierarchical community detection. V. Lyzinski {\em et al.}\cite{lyzinski2016community} focus on a hierarchical version of the classical stochastic block model which is commonly used to model community structure. Their goal is to get the finer-grained structure at each level of the hierarchy, which is performing a ``top down" decomposition actually. 

In recent years, there have been some new community partition strategy, such as traditional method combined with deep learning technology. L. Yang {\em et al.}\cite{yang2016modularity} present a nonlinear reconstruction algorithm for community detection by taking advantage of deep neural networks.

In social networks, as influence propagation is an important issue, there are some influence-based community detection. N. Alduaiji {\em et al.}\cite{alduaiji2018influence} consider that identifying active and influential communities which have influential users by dynamic weighted graphing, then predicting their future activities. 
They identify users with frequent interactions, then further determine the influence to their neighbours. A. Bozorgi {\em et al.}\cite{bozorgi2017community} propose a Decidable Competitive Model to address the competitive influence maximization problem. They exploit the structure of community to calculate the influence propagation of each node within its own community to find the influential nodes. At last, they aim to select minimum number of seed users to achieve a higher influence spread than nodes selected by other competitors.
\section{Network Model and Problem Formulation}\label{network-model}

\subsection{The Network Model}
A social network is modelled as a directed graph $G=(V, E)$, where each vertex $v$ in $V$ is a user, and each edge $e=(u, v)$ in $E$ is the social relationship between user $u$ and $v$. Let $ N^-(v)$ and $N^+(v)$ denote the sets of incoming neighbours and outgoing neighbours of node $v$, respectively.  Each edge $e=(u,v) \in E$ in the graph is associated with an activation probability $p_{uv}\in[0,1]$, which means each node $v\in V$ is influenced by its active incoming neighbours $ N^-(v)$ with probability $p_{uv}$. In IC model, the information diffusion process can be described in discrete steps: each node $v$ that is activated first in round $t-1$ will have only one chance to activate its inactive outgoing neighbours in $N^+(v)$ in round $t$. All nodes that are active in step $t-1$ will still active in step $t$.  The propagation process ends until there is no new node being activated in this round. 

In LT model, each edge $(u,v)\in E$ is associated with a weight $b_{uv}$, each node $v\in V$ is influenced by its incoming neighbours $u$ satisfies $\sum_{u \in N^-(v)}b_{uv}$ $\leq1$. In addition, each node $v \in V$ is related with a threshold $\theta_v$ which is uniformly distributed in the interval $[0,1]$. The information diffusion process is: all nodes that are active in step $t-1$ will still active in step $t$. An inactive node $v$ will be active if the total weight of its incoming neighbours that are active is larger than or equal to $\theta_v$, i.e. $\sum_{u \in N^-(v)}b_{uv}\geq\theta_v$. The propagation process ends until there is no new node being activated.

\subsection{Problem Formulation}
Assume that there are $m$ communities $M=\{1,2,\cdots,m\}$, we allocate a community identifier $s_j\in M=\{1,2,...,m\}$ for each node $j$, and so all the nodes in the same community have the same community identifier, i.e. $S_i=\{j\in V|s_j=i\}$ represents the node set in community $S_i$, where $1\leq i\leq m$. For a node pair $i\in S_k$ and $j\in S_k$ in the same community $S_k$, we use $p_{S_k}(i,j) \in[0,1]$  to denote the influence probability from node $i$ to $j$ within community $S_k$. For a  community $S_k$ and a node $i\in S_k$, we use $\sigma_{S_k}(i)=\sum_{j\in \{S_k\backslash i\}}p_{S_k}(i,j)$ to denote the influence propagation of node $i$ within community $S_k$. Assume there is a non-empty subset $D\in S_k$, the sum influence propagation of all nodes in $D$ within community $S_k$ is denoted by $\sigma_{S_k}(D)=\sum_{i\in D}\sigma_{S_k}(i)$. In the rest of the paper, we replace $\sigma_Y(Y)$ with $\sigma(Y)$ to denote the influence propagation of community $Y$ for simplicity. So we denote the total influence propagation within communities in the social networks after partitioning to $m$ communities as $\sum_{k=1}^{m}\sigma(S_k)$.

Next, let's describe the community partition problem under IC model we want to solve as follows:

\textbf{Influence Maximization for  Community Partition Problem (IMCPP):} Given a graph $G(V,E)$ as a social network and its information diffusion is under IC model. We partition the social network into 
$m$ disjoint sets \{$S_1, S_2,\dots,$ $S_m$\}, then the constraints are: (1) $\cup_{k=1}^{m}(S_k)=V$; (2) ${\forall i\neq j, S_i\cap S_j}$ ${=\emptyset}$. Our goal is to maximize the influence propagation function: 
\begin{equation*}
	\begin{aligned}
		\max f(S_1,S_2, \dots, S_m)=\sum\limits_{k=1}^{m}\sigma(S_k)
	\end{aligned}
	\label{equation111}
\end{equation*}  
Z. Lu {\em et al.} \cite{lu2014influence} proved that the maximum K-community partition problem  is NP-hard under IC model. Our IMCPP can be reduced to the maximum K-community partition problem, so the IMCPP is also NP-hard.
\section{Upper Bound and Lower Bound}\label{bound}
\subsection{Property of Influence Propagation Function $f$}
In this section, we discuss the properties of the objective function $f$ for IMCPP. We need to know the  submodularity and supermodularity of a set function before we introduce the  property of $f$. Let $X$ with $|X|=n$ be a \textit{ground set}. A set function on $X$ is a function $h$: $2^X\rightarrow R$.
A set function $h$: $2^X\rightarrow R$ is submodular if for any $A\subseteq B\subseteq X$ and $u\in X\backslash B$, we have $h(A\cup\{u\})-h(A)\geq h(B\cup\{u\})-h(B)$. There is another equivalent definition for submodularity, that is $h(A\cap B)-h(A\cup B)\leq h(A)+h(B)$. For supermodularity, the inequality is reversed to submodularity.

We found that the influence propagation function $f$ does not satisfy the supermodularity and submodularity, which is shown as follows. 
\begin{lemma}
	The influence propagation function $f$ for the community partition problem is neither supermodular nor submodualr under the IC model.
	\label{lemma2}
\end{lemma}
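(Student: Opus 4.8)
The plan is to reduce the statement about the partition objective $f$ to a statement about a single community's influence function and then refute both inequalities by explicit small counterexamples. Since $f(S_1,\dots,S_m)=\sum_{k=1}^{m}\sigma(S_k)$ is additively separable over the disjoint communities, and the submodular/supermodular definitions quoted above are stated for a set function $h:2^X\to\mathbb{R}$, it suffices to analyse the within-community influence function $\sigma(Y)=\sum_{i\in Y}\sum_{j\in Y\setminus\{i\}}p_Y(i,j)$ as a set function of the node set $Y\subseteq V$: if $\sigma$ violates the submodular inequality on one instance and the supermodular inequality on another, then so does $f$ (place all the relevant nodes in one community and keep the remaining partition fixed). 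The one subtlety to keep in mind throughout is that $p_Y(i,j)$ is the IC activation probability from $i$ to $j$ computed using only paths internal to $Y$, so enlarging $Y$ can open new $i\to\cdots\to j$ routes and strictly raise the pairwise probabilities among vertices already present.

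Next I would fix a concrete small directed graph with prescribed edge probabilities $p_{uv}$ and pick a chain $A\subseteq B\subseteq V$ together with a test vertex $u\in V\setminus B$. To refute submodularity I would design the instance so that $u$ acts as a \emph{bridge}: adding $u$ creates new two-hop paths $i\to u\to j$ for pairs $i,j$ that were previously poorly connected. Because the larger set $B$ contains more such pairs than $A$, the newly created internal paths make the marginal gain strictly larger on $B$, i.e. $\sigma(B\cup\{u\})-\sigma(B)>\sigma(A\cup\{u\})-\sigma(A)$, which contradicts the diminishing-returns form of submodularity.

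To refute supermodularity I would use a second choice of $A,B,u$ (or a second small instance) exploiting the boundary behaviour $\sigma(\{v\})=0$: the increment that first pairs up two strongly connected nodes can be large, while a later increment adds a vertex contributing little additional propagation, yielding $\sigma(A\cup\{u\})-\sigma(A)>\sigma(B\cup\{u\})-\sigma(B)$ and hence a violation of the reversed (supermodular) inequality. Combining the two instances shows that $\sigma$, and therefore $f$, is neither submodular nor supermodular.

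I expect the main obstacle to be the bookkeeping of the multi-path activation probabilities $p_Y(i,j)$ in the IC model for the chosen subgraphs and the verification that each inequality comes out strict. The delicate conceptual point — and the reason no single slick argument works — is that inserting a vertex $u$ into a community simultaneously (a) adds $u$'s own outgoing and incoming influence and (b) strictly increases the pairwise probabilities among the pre-existing vertices by opening new internal paths; effect (b) grows with the size of the community and pushes toward increasing returns, whereas the contribution in (a) can shrink as the community saturates, pushing toward diminishing returns. These competing tendencies are exactly what allow one to engineer instances on either side of both inequalities, so the counterexamples must be chosen so that the dominant effect is unambiguous in each case.
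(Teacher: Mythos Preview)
Your plan is correct and matches the paper's approach: reduce the question about $f$ to the single-community function $\sigma(\cdot)$ and exhibit explicit counterexamples violating each inequality. The paper carries this out on one concrete $5$-vertex directed graph (with edge probabilities $0.5$ and $1$), computing $\sigma(A\cup\{4\})-\sigma(A)<\sigma(B\cup\{4\})-\sigma(B)$ for $B\subset A$ to kill supermodularity and $\sigma(C\cup\{5\})-\sigma(C)>\sigma(D\cup\{5\})-\sigma(D)$ for $D\subset C$ to kill submodularity; your abstract description of the ``bridge'' and ``boundary'' mechanisms is exactly what drives those two numerical checks.
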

\begin{proof}
	We give two counterexamples to prove that $f$ is neither supermodular nor submodualr under the IC model.
	In figure \ref{nonsubmodular}, the numbers on the edges are influence probabilities. First, we consider two communities: community $A=\{1,2,3,5\}$ and community $B=\{1,2,5\}$, so community $B$ is the subset of $A$. When we add node $4$ to community $A$ and $B$, respectively, we can calculate the marginal gain the two communities can obtain. $\sigma(A\cup\{4\})-\sigma(A)=5.375-3.75=1.625$, $\sigma(B\cup\{4\})-\sigma(B)=3.75-2=1.75$. Therefore, $\sigma(A\cup\{4\})-\sigma(A)\textless\sigma(B\cup\{4\})-\sigma(B)$ which shows that the influence propagation function $\sigma(\cdot)$ within each community is not supermodular under IC model. As $f(S_1,S_2,\dots,S_m)=\sum_{i=1}^{m}\sigma(S_i)$ is the sum of  influence propagation of each community, then $f$ is also not supermodular under IC model.
	
	Next, we consider the second counterexample to prove that $f$ is not submodualr under the IC model. Let community $C=\{1,2,3,4\}$ and community $D=\{1,3,4\}$. Then we add node $5$ to community $C$ and $D$, respectively. We have  $\sigma(C\cup\{5\})-\sigma(C)=5.375-3=2.375$, $\sigma(D\cup\{5\})-\sigma(D)=3.75-2=1.75$ by computing directly. Therefore, $\sigma(C\cup\{5\})-\sigma(C)\textgreater\sigma(D\cup\{5\})-\sigma(D)$, since community $D$ is a subset of $C$, which implies $\sigma(\cdot)$ is not submodular in the IC model. So we prove that the sum influence propagation function $f$ in a social network for the community partition problem under IC model is also not submodular.
	\begin{figure}[htbp]
		\centerline{\includegraphics[width=4cm, height=3.7cm]{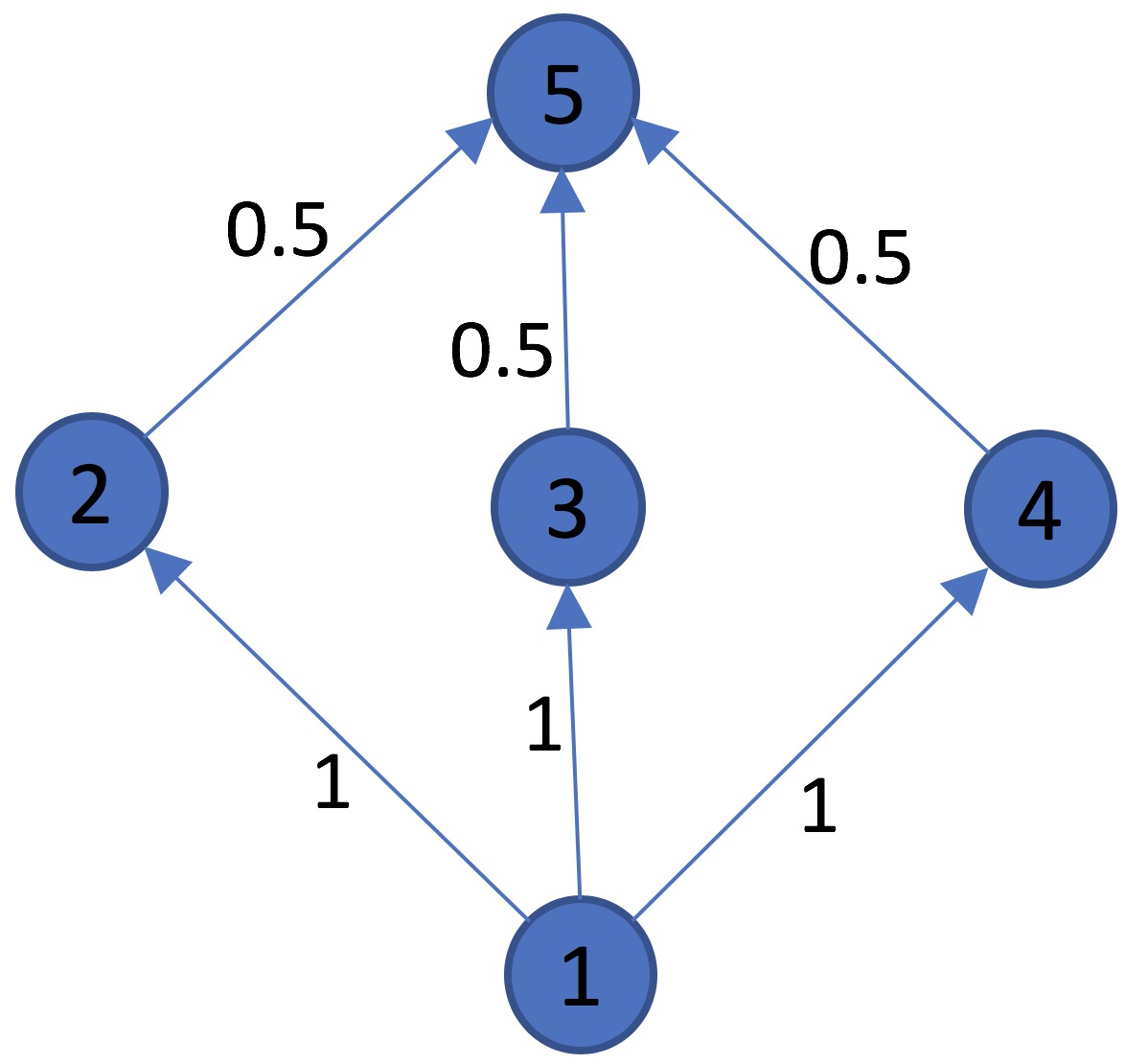}}
		\caption{An counter example in IC model}\label{nonsubmodular}
	\end{figure} 	 
\end{proof}
From above, we find that IMCPP is not submodular or supermodular in IC model unfortunately, we cannot adopt the standard procedure for optimizing a submodular or a supermodular function to get the approximation solution. There are no general methods to solve this nonsubmodular problem, Lu {\em et al.} \cite{lu2015competition} provide us with a sandwich algorithm which can gain a data dependent solution. The point of sandwich method is to find an upper bound and  a lower bound for the objective problem.

\subsection{Upper Bound}
First, we construct an upper bound for the objective function $f$, and prove it is monotone and supermodular. For each edge $(u,v)\in E$, we assume the propagation probability $p_{uv}$ in IC model is equal to the edge weight $b_{uv}$ in LT model, and for each node $v\in V$, it satisfies $\sum_{u\in N^-(v)}p_{uv}\leq 1$.
\begin{theorem}
	\label{th1}
	The influence propagation function $\bar{f}$ in LT model  is an upper bound for the influence function $f$ in IC model.
\end{theorem}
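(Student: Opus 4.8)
The plan is to reduce the global inequality $\bar{f}\ge f$ to a collection of pairwise comparisons and then exploit the ``possible-world'' (live-edge) representations of the two diffusion models. Since $f(S_1,\dots,S_m)=\sum_{k=1}^{m}\sigma(S_k)$ and $\sigma(S_k)=\sum_{i\in S_k}\sum_{j\in S_k\setminus\{i\}} p_{S_k}(i,j)$, and the identical decomposition holds for $\bar{f}$ with the LT influence probabilities $\bar{p}_{S_k}(i,j)$, it suffices to prove that for every community $S_k$ and every ordered pair $i,j\in S_k$ we have $\bar{p}_{S_k}(i,j)\ge p_{S_k}(i,j)$. Summing these termwise inequalities over $j$, then over $i$, then over $k$ yields $\bar{f}\ge f$. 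Hence I would fix a community (equivalently, work in the subgraph induced on $S_k$), a source $i$, and a target $j$, and compare the probability that $i$ activates $j$ under the two models.

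First I would recall the live-edge descriptions. In the IC model each edge $(u,v)$ is declared live independently with probability $p_{uv}$, and $i$ activates $j$ exactly when $j$ is reachable from $i$ through live edges. In the LT model with $b_{uv}=p_{uv}$, the equivalent live-edge process lets each node $v$ select at most one incoming edge, choosing $(u,v)$ with probability $p_{uv}$ and selecting no edge with the remaining probability $1-\sum_{u\in N^-(v)}p_{uv}$; this is a well-defined distribution precisely because of the standing assumption $\sum_{u\in N^-(v)}p_{uv}\le 1$, which is the only place the hypothesis enters. Again $i$ activates $j$ iff $j$ is reachable from $i$ in the selected subgraph.

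The core comparison then follows from a path argument. For the IC model, $j$ is reachable from $i$ iff at least one simple path $P$ from $i$ to $j$ has all of its edges live, so writing $A_P$ for that event and $\mathcal{P}(i,j)$ for the set of simple $i$-$j$ paths, Boole's inequality gives
\[
p_{S_k}(i,j)=\Pr\Bigl(\bigcup_{P\in\mathcal{P}(i,j)}A_P\Bigr)\le\sum_{P\in\mathcal{P}(i,j)}\Pr(A_P)=\sum_{P\in\mathcal{P}(i,j)}\ \prod_{(u,v)\in P}p_{uv}.
\]
For the LT model I would show that the same path sum is attained \emph{exactly}: because every node has at most one selected incoming edge, tracing the selected edges backward from $j$ produces a single deterministic chain, so $j$ is reachable from $i$ iff this chain first meets $i$ along some simple path $P$, an event of probability $\prod_{(u,v)\in P}p_{uv}$; the events associated with distinct simple paths are mutually exclusive, since they prescribe different selections at the first node where the paths diverge. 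Thus $\bar{p}_{S_k}(i,j)=\sum_{P\in\mathcal{P}(i,j)}\prod_{(u,v)\in P}p_{uv}$, and comparing the two expressions gives $\bar{p}_{S_k}(i,j)\ge p_{S_k}(i,j)$, which is exactly what the reduction requires.

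The step I expect to be the main obstacle is the exact LT path-sum identity, rather than the IC bound: I must argue carefully that the backward selection chain, which could in principle enter a cycle, contributes to the reachability event only through the simple prefix that first hits $i$, and that summing the disjoint simple-path contributions recovers the full reachability probability without double counting. The underlying algebraic reason the dominance holds at each node is the elementary fact $1-\prod_{u\in T}(1-p_{uv})\le\sum_{u\in T}p_{uv}$ for any active in-neighbour set $T$, valid because $\sum_{u}p_{uv}\le 1$; the union-bound/path formulation above is simply the global, path-level incarnation of this local dominance. Once the pairwise inequality is secured, the summations from the first paragraph complete the proof that $\bar{f}$ upper-bounds $f$.
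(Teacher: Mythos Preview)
Your proposal is correct and uses the same high-level reduction as the paper: both decompose $\bar{f}\ge f$ into the per-community, per-pair inequality $\bar{p}_{S_k}(i,j)\ge p_{S_k}(i,j)$ and then sum. The difference lies in how the pairwise inequality is established. The paper argues by case analysis on the path topology (a single $n$-hop chain versus a two-hop ``fan'' with $l$ intermediate nodes) and proves the elementary inequality $1-\prod_{i=1}^{l}(1-a_i)\le\sum_{i=1}^{l}a_i$ by induction on $l$, then asserts that any $u$--$v$ connection decomposes into these two patterns. Your route instead invokes the live-edge representations: a union bound over all simple $i$--$j$ paths upper-bounds the IC probability, while the one-incoming-edge structure of the LT live-edge graph makes those path events disjoint and hence their sum the \emph{exact} LT probability. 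This buys you a proof that works uniformly for arbitrary subgraph topologies, whereas the paper's argument is really only a sketch outside the two specific configurations it analyses; conversely, the paper's version requires no familiarity with the live-edge coupling and keeps everything at the level of the inclusion--exclusion inequality. Your identified obstacle (ruling out double counting when the backward LT chain could cycle) is handled exactly as you outline: the backward walk from $j$ is deterministic with in-degree at most one, so it either hits $i$ along a unique simple path or never hits $i$ at all, and distinct simple paths force incompatible selections at their first backward divergence.
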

\begin{proof}
	We need to prove that the influence propagation within each community in LT model is larger than that in IC model, i.e., $\sigma(S_k)_{LT}\geq \sigma(S_k)$.
	
	We compare the influence propagation under IC and LT models in two different cases: 1. When the path from node $v_0$ to $v_n$ is a $n$ hop single path, the propagation probability from $v_0$ to $v_n$ is the same under IC and LT model, we give an example in Fig.\ref{sample1}, $p_{S_k}(v_0,v_n)=\prod_{i=1}^{n} p_{v_{i-1}v_i}$; 2. When there are more branches in some hops, we also give a tiny example in Fig.\ref{sample2} to calculate the influence propagation probability from node $u$ to $v$. In IC model, $p_{S_k}(u,v)=1-\prod_{i=1}^{l} (1-p_{uv_{1i}}{p_{v_{1i}v}})$; in LT model, $p_{S_k}(u,v)_{LT}=\sum _{i=1}^{l}p_{uv_{1i}}{p_{v_{1i}v}}$. 
	
	Then we can prove inequation $1-\prod_{i=1}^{l}(1-p_{uv_{1i}}{p_{v_{1i}v}})\leq\sum _{i=1}^{l}p_{uv_{1i}}{p_{v_{1i}v}}$ by induction. When $l=2$, it is obvious that $1-(1-p_{uv_{11}}{p_{v_{11}v}})(1-p_{uv_{12}}{p_{v_{12}v}})=p_{uv_{11}}{p_{v_{11}v}}+p_{uv_{12}}{p_{v_{12}v}}-(p_{uv_{11}}{p_{v_{11}v}})\times(p_{uv_{12}}{p_{v_{12}v}})\leq p_{uv_{11}}{p_{v_{11}v}}+p_{uv_{12}}{p_{v_{12}v}}$ as $p_{uv_{1i}}\in [0,1]$ and $p_{v_{1i}v}\in [0,1]$, $p_{uv_{1i}}p_{v_{1i}v}\in[0,1]$ in IC and LT model. We assume that when $l=n$, this equation is true, i.e. $1-\prod_{i=1}^{n} (1-p_{uv_{1i}}{p_{v_{1i}v}})\leq\sum_{i=1}^{n}p_{uv_{1i}}{p_{v_{1i}v}}$. When $l=n+1$, we get $1-\prod_{i=1}^{n+1} (1-p_{uv_{1i}}{p_{v_{1i}v}})=1-\prod_{i=1}^{n} (1-p_{uv_{1i}}{p_{v_{1i}v}})(1-p_{uv_{1(n+1)}}{p_{v_{1(n+1)}v}})=1-\prod_{i=1}^{n} (1-p_{uv_{1i}}{p_{v_{1i}v}})+\prod_{i=1}^{n} (1-p_{uv_{1i}}{p_{v_{1i}v}})$ $(p_{uv_{1(n+1)}}{p_{v_{1(n+1)}v}})\leq\sum _{i=1}^{n}p_{uv_{1i}}{p_{v_{1i}v}}+(p_{uv_{1(n+1)}}{p_{v_{1(n+1)}v}})=\sum_{i=1}^{n+1}\\(p_{uv_{1i}}{p_{v_{1i}v}})$, the last inequation is because $\prod_{i=1}^{n} (1-p_{uv_{1i}}{p_{v_{1i}v}})(p_{uv_{1(n+1)}}{p_{v_{1(n+1)}v}}\\) \leq(p_{uv_{1(n+1)}}$ ${p_{v_{1(n+1)}v}})$ since $\prod_{i=1}^{n} (1-p_{uv_{1i}}{p_{v_{1i}v}})\in[0,1]$. So $p_{S_k}(u,v)_{LT}\geq p_{S_k}(u,v)$ is established when $l\geq 2$.
	
	In a community, the path from node $u$ to $v$ is either a single path or a multi-branched path, there are two cases mentioned above in the calculation of influence propagation probability. What is said above implies an important conclusion that the influence propagation probability from any node $u$ to $v$ in a community $S_k$: $p_{S_k}(u,v)_{LT}\geq p_{S_k}(u,v)$. Based on this conclusion, we can further get that $ \sigma_{S_k}(u)_{LT}\geq\sigma_{S_k}(u)$ as
	$\sigma_{S_k}(u)=\sum_{v\in \{S_k\backslash u\}}p_{S_k}(u,v)$, $\sigma{(S_k)}=\sum_{u\in {S_k}}\sigma_{S_k}(u)$. Then we have that $ \bar{f}\geq f$ as $f=\sum_{k=1}^{m}\sigma(S_k)$, which proves that the influence propagation function $\bar{f}$ in LT model  is an upper bound for the influence function $f$ in IC model.
	\begin{figure}[htbp]
		\begin{minipage}[t]{0.5\linewidth}
			\centering
			\includegraphics[height=0.6cm, width=4.5cm]{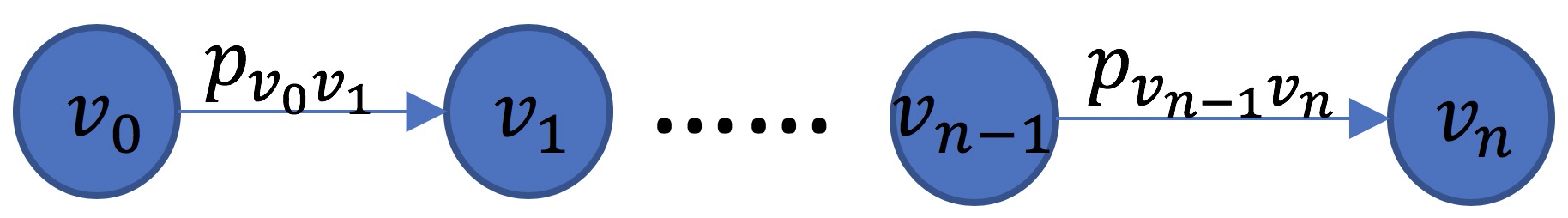}
			\caption{Single path}
			\label{sample1}
		\end{minipage}
		\begin{minipage}[t]{0.5\linewidth}
			\centering
			\includegraphics[height=2cm, width=3.5cm]{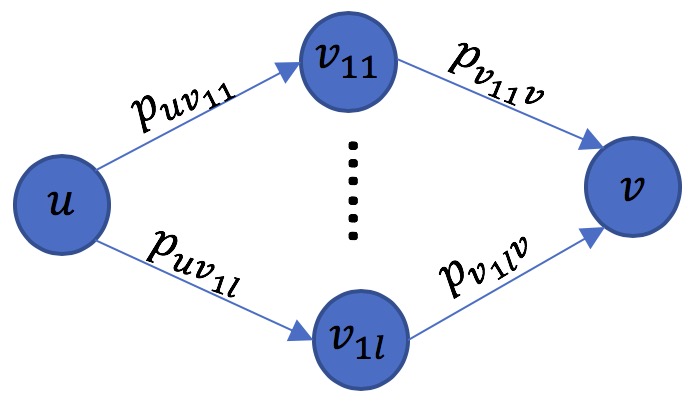}
			\caption{Multiple paths}
			\label{sample2}
		\end{minipage}
	\end{figure}
\end{proof}

Then, we know that the upper bound function $\bar{f}$, under LT model, is monotone and supermodular, which is shown as following theorem, that is

\begin{theorem}[\cite{ni2020continuous}]
	The upper bound function $\bar{f}$, under LT model, for the objective function of IMCPP is monotone and supermodular.
	\label{theorem99}
\end{theorem}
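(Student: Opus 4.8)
The plan is to reduce both properties to a single combinatorial representation of $\bar{f}$ as a nonnegatively weighted sum over simple paths; once that representation is in hand, monotonicity and supermodularity follow almost immediately. Throughout, write $w(P)=\prod_{(u,v)\in P}b_{uv}$ for the weight of a simple directed path $P$, and recall from the proof of Theorem \ref{th1} that, under the normalization $\sum_{u\in N^-(v)}p_{uv}\leq 1$, the LT influence probability decomposes as a product of edge weights along a single path and as a sum across parallel paths.

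First I would make this precise using the live-edge (triggering-set) view of the LT model: each node $v$ independently selects at most one incoming edge, choosing $(u,v)$ with probability $b_{uv}$ and selecting no edge with probability $1-\sum_{u\in N^-(v)}b_{uv}\geq 0$. In the resulting live-edge graph every vertex has in-degree at most one, so for any $u,v$ there is at most one live path from $u$ ending at $v$; hence the events ``$P$ is live'' for distinct simple paths $P$ from $u$ to $v$ are mutually exclusive, since two such paths must diverge at some vertex that would then need two distinct selected in-edges. Summing their probabilities gives
\begin{equation*}
p_{S}(u,v)_{LT}=\sum_{P:\,u\to v,\ V(P)\subseteq S}w(P),
\end{equation*}
the sum ranging over simple paths whose vertex set lies entirely in the community $S$ (propagation being confined to the induced subgraph). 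Summing over ordered pairs $u\neq v$ then yields the key identity
\begin{equation*}
\sigma(S)_{LT}=\sum_{P:\,V(P)\subseteq S}w(P),
\end{equation*}
where $P$ ranges over all simple directed paths with at least one edge contained in $S$, so that $\bar{f}=\sum_{k=1}^{m}\sigma(S_k)_{LT}$ is a sum of such path weights across all communities.

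Given this identity, monotonicity is immediate: if $S\subseteq T$ then every path contained in $S$ is contained in $T$, and since all weights $w(P)$ are nonnegative we obtain $\sigma(S)_{LT}\leq\sigma(T)_{LT}$. For supermodularity I would compute the marginal gain of adding a node $x\notin S$; the paths contained in $S\cup\{x\}$ but not in $S$ are exactly those passing through $x$, whence
\begin{equation*}
\sigma(S\cup\{x\})_{LT}-\sigma(S)_{LT}=\sum_{P:\,x\in V(P),\ V(P)\subseteq S\cup\{x\}}w(P).
\end{equation*}
For $A\subseteq B$ with $x\notin B$, every path through $x$ contained in $A\cup\{x\}$ is also contained in $B\cup\{x\}$, so this collection only grows as the base set grows; by nonnegativity of $w(P)$ the marginal gain is monotone in the base set, i.e. $\sigma(A\cup\{x\})_{LT}-\sigma(A)_{LT}\leq\sigma(B\cup\{x\})_{LT}-\sigma(B)_{LT}$, which is precisely supermodularity. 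Since $\bar{f}$ is a sum over communities of functions each enjoying these two properties, $\bar{f}$ is monotone and supermodular.

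The main obstacle is the first step: rigorously justifying the path-sum formula for $p_{S}(u,v)_{LT}$, in particular the mutual exclusivity of distinct live paths, which is exactly where the LT in-degree-at-most-one structure and the normalization $\sum_{u}b_{uv}\leq 1$ are essential. The subsequent monotonicity and supermodularity arguments are then routine bookkeeping on nonnegative path weights.
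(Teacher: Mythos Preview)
The paper does not actually prove this theorem in the text; it is stated and immediately attributed to \cite{ni2020continuous}, so there is no in-paper argument to compare against. Your proof is correct and self-contained. The live-edge (triggering-set) representation of the LT model gives exactly the mutual-exclusivity you claim: in any realization every vertex has in-degree at most one, so two distinct simple $u\!\to\!v$ paths cannot both be live, and the activation event is the disjoint union over simple paths of the events ``$P$ is live''; this yields the path-sum identity $\sigma(S)_{LT}=\sum_{P:V(P)\subseteq S}w(P)$. From that identity, monotonicity and supermodularity are immediate by inclusion of path families, just as you argue. The final lift from $\sigma(\cdot)_{LT}$ to $\bar{f}$ is also sound once one notes that in the reformulation $\bar{f}(A)=\sum_{i\in M}\sigma(\{j:(i,j)\in A\})_{LT}$ decomposes over the disjoint blocks $\{i\}\times V$ of the ground set $U=M\times V$, so adding an element $(i,j)$ affects only the $i$-th summand and supermodularity is inherited by the sum. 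Whatever argument the cited reference uses, your path-weight decomposition is a clean route that makes both properties transparent.
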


\subsection{Lower Bound}
Next, we will formulate a lower bound for IMCPP. The main idea of constructing the lower bound is to approximate the actual expected influence within a community. We use the Maximum Influence Arborescence (MIA) model which is proposed by Chen {\em et al.} in \cite{chen2010scalable} to simplify the IC model. The influence propagation from node $u$ to $v$ is effectively approximated by the Maximum Influence Path (MIP) which is the maximum influence probability path among all the possible paths from node $u$ to $v$. Given a path $P=(v_1,v_2,\cdots,v_m)$, we define its propagation probability as $pp(P)=\prod_{i=1}^{m-1}p_{v_iv_{i+1}}$. Within a community $S_k$, the maximum influence path from $u$ to $v$ can be defined as $MIP_{S_k}(u,v)=\arg\max\limits_{P}pp(P)$. MIA creates a maximum influence in-arborescence (MIIA),  which is a directed tree constructed by the union  of the maximum influence path, MIIA($v$,$\theta$) denotes the union of each MIP to node $v$ with the influence probability at least $\theta$. Within a community $S_k$, that is
\begin{equation*}
	MIIA_{S_k}(v,\theta)=\bigcup_{u\in S_k,pp(MIP_{S_k}(u,v))\geq\theta}MIP_{S_k}(u,v)
\end{equation*}

Symmetrically, maximum influence out-arborescence(MIOA) is used to estimate the influence of $v$ to other nodes. MIOA($v$,$\theta$) eliminates paths that the influence probabilities of $v$ to $u$ are less than $\theta$. The ties are be broken in MIPs consistently, so MIIA$(v,\theta)$ is an arborescence which does not have directed cycles. Then we define the influence probability of a node $v$ in community $S_k$ as follows, that is

\begin{definition} Given a seed set $B\in S_k$, the influence probability of a node $v$ within a community $S_k$ in $MIIA_{S_k}(v,\theta$) is denoted by $ap(v,B,MIIA_{S_k}(v,\theta))=1-\prod_{u\in N^{in}(v)}(1-ap(u, B, MIIA_{S_k}(v,\theta))\cdot p_{uv})$, where $N^{in}(v)$ is the set of in-neighbors of $v$ in $MIIA_{S_k}(v,\theta)$.
\end{definition}

Here, $ap(v,B,MIIA_{S_k}(v,\theta))$ is the influence probability that node $v\in S_k$ receives from seed nodes $B$ within community $S_k$. In IMCPP problem, all the nodes are seed nodes separately, so $v\in S_k$ is influenced by all the other nodes $S_k\backslash v$ in community $S_k$. The total influence propagation of node $i$ within community $S_k$ in MIA model can be calculated as 
\begin{equation*}
	\sigma_{S_k}^M(i)=\sum\limits_{v\in S_k\backslash{i}}ap(v,\{i\},MIIA_{S_k}(v,\theta))
\end{equation*}
where $i\in S_k$. The total influence propagation of all nodes in community $S_k$ is $\sigma_{S_k}^M(S_k)=\sum_{i\in S_k}\sigma_{S_k}^M(i)$. We simplify the influence probability in a community under MIA model as $\sigma^{M}(\cdot)$. As we ignore other paths except the maximum influence path, the total influence propagation $\underline{f}=\sum_{k=1}^m\sigma^M(S_k)$ in MIA model satisfies $\underline{f}\leq f$.

Next, we need to analyse properties of the lower bound influence propagation function $\underline{f}$.
The first property of $\underline{f}$ is monotone.
\begin{lemma}
	The lower bound function $\underline{f}$, under MIA model, for the objective function $f$ of IMCPP is monotone.
	\label{lemma1}
\end{lemma}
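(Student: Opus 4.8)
The plan is to show monotonicity directly from the structure of the MIA-based influence function, exploiting the fact that $\underline{f}=\sum_{k=1}^m\sigma^M(S_k)$ is a sum over communities. Since the partition operation that the optimization manipulates amounts to moving a node from one community into another, the natural statement to establish is that adding a node $u$ to a community $S_k$ does not decrease its internal MIA-influence, i.e. $\sigma^M(S_k\cup\{u\})\geq\sigma^M(S_k)$. First I would fix a community $S_k$ and a node $u\notin S_k$, and compare the two quantities term by term. The key object to track is $ap(v,B,MIIA_{S_k}(v,\theta))$, the activation probability of each node $v$, and how it changes when the ground community grows.

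The central technical step is to argue that enlarging the community can only enlarge each arborescence $MIIA_{S_k}(v,\theta)$ and can only add seed sources, so that every activation probability is nondecreasing. Concretely, I would verify two things. First, when $u$ is added, $MIIA_{S_k\cup\{u\}}(v,\theta)\supseteq MIIA_{S_k}(v,\theta)$, because the union defining the arborescence ranges over more candidate nodes $u'$ whose maximum-influence paths clear the threshold $\theta$; no existing path is removed. Second, with respect to this possibly larger in-arborescence and a larger seed set, the recursive definition $ap(v,B,MIIA)=1-\prod_{w\in N^{in}(v)}\bigl(1-ap(w,B,MIIA)\cdot p_{wv}\bigr)$ is monotone: each factor $1-ap(w,\cdot)\,p_{wv}$ lies in $[0,1]$ and is nonincreasing in $ap(w,\cdot)$, and adding an in-neighbour multiplies in one more such factor, both of which can only raise $ap(v,\cdot)$. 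The cleanest way to formalize this is an induction along the topological order of the (acyclic) arborescence: leaves have $ap$ equal to $1$ when seeded and unchanged otherwise, and the inductive step propagates the nondecreasing property up the tree.

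Granting that each $ap(v,\cdot)$ is nondecreasing, monotonicity of $\sigma^M$ follows by summation. For the community that gains the node we have
\begin{equation*}
\sigma^M(S_k\cup\{u\})=\sum_{v\in (S_k\cup\{u\})\setminus\{i\}}\!\!\sum_{i}ap\bigl(v,\{i\},MIIA_{S_k\cup\{u\}}(v,\theta)\bigr)\geq\sigma^M(S_k),
\end{equation*}
since every surviving summand is at least as large as before and the remaining summands (those involving $u$ as source or target) are nonnegative. Because $\underline{f}$ is the sum of the $\sigma^M(S_k)$ over all communities and each term is individually monotone in its argument, $\underline{f}$ is monotone, which is exactly the claim of Lemma~\ref{lemma1}.

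I expect the main obstacle to be pinning down the monotonicity of the arborescence itself, i.e.\ the claim $MIIA_{S_k\cup\{u\}}(v,\theta)\supseteq MIIA_{S_k}(v,\theta)$ together with the consistent tie-breaking rule. One must check that introducing $u$ into the community does not reroute or suppress an already-chosen maximum-influence path to some $v$; since maximum-influence paths and the tie-breaking are defined so as to yield a fixed arborescence, the added node can only contribute new branches rather than displace existing ones, but this is the step that demands care. Once the arborescence-inclusion and the recursion-monotonicity are both in hand, the rest is a routine nonnegativity-and-summation argument.
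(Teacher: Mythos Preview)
Your overall strategy---show $\sigma^M(S_k\cup\{u\})\geq\sigma^M(S_k)$ term by term and then sum over communities---matches the paper's, but the paper's own argument is essentially a one-line assertion (``Obviously, $\Delta(i|S_k)\geq 0$'') with no structural analysis of the arborescence, so your proposal is considerably more detailed than what appears there.

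That said, there is a genuine gap in the technical step you yourself flag. The inclusion $MIIA_{S_k\cup\{u\}}(v,\theta)\supseteq MIIA_{S_k}(v,\theta)$ is \emph{not} true in general: adding $u$ can reroute an existing maximum-influence path. If the old MIP from some $a\in S_k$ to $v$ was $a\to b\to v$, but $a\to u\to v$ has strictly higher propagation probability, then the edge $a\to b$ is dropped from the new in-arborescence even though it was present before. So the ``added node can only contribute new branches rather than displace existing ones'' claim fails, and with it the induction you set up on a growing tree.

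The repair is simpler than the route you proposed. Because every seed set here is a singleton $\{i\}$ and $MIIA_{S_k}(v,\theta)$ is an in-arborescence rooted at $v$, there is a \emph{unique} path from $i$ to $v$ in that tree, so the recursion collapses to
\[
ap\bigl(v,\{i\},MIIA_{S_k}(v,\theta)\bigr)=pp\bigl(MIP_{S_k}(i,v)\bigr)\cdot\mathbf{1}\bigl[pp(MIP_{S_k}(i,v))\geq\theta\bigr].
\]
Enlarging $S_k$ only enlarges the set of admissible paths over which the maximum is taken, so $pp(MIP_{S_k\cup\{u\}}(i,v))\geq pp(MIP_{S_k}(i,v))$, and hence each surviving term is nondecreasing. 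The remaining terms (those with $u$ as source or target) are nonnegative, and your summation argument then goes through unchanged. This bypasses the arborescence-inclusion issue entirely.
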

\begin{proof}
	For the influence propagation function within the community $S_k$, $\sigma^M(S_k)$, we know that when adding a seed node $i$ to this community $S_k$, the conditional expected marginal gain produced by $i$ to the community $S_k$ can be denoted as: $\Delta(i|S_k) =\mathbb{E}[\sigma^M(S_k\cup\{i\})-\sigma^M(S_k)]$. Obviously, $\Delta(i|S_k)\geq0$, so $\sigma^M(S_k)$ is monotone nondecreasing. As $\underline{f}=\sum_{k=1}^{m}\sigma^M(S_k)$, $\underline{f}$ is also monotone.
\end{proof} 
We also find that $\underline{f}$ satisfies submodularity, which is very helpful to solve the problem, that is
\begin{theorem}
	The lower bound function $\underline{f}$, under MIA model, for the objective function $f$ of IMCPP is submodular.
	\label{theorem4}
\end{theorem}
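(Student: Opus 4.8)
The plan is to reduce the submodularity of the partition objective $\underline{f}$ to the submodularity of the influence function of a single community, and then to establish the latter through the activation-probability recursion of the MIA model.

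First I would make precise the domain on which $\underline{f}$ is regarded as a set function, since as stated it is a function of a partition $(S_1,\dots,S_m)$. Following the partition-matroid encoding already announced in the paper, I take the ground set to be $\Omega = V \times M$, where an element $(j,i)$ records that node $j$ is placed in community $i$, and a feasible partition is a base of the partition matroid assigning each node exactly one label. For $X \subseteq \Omega$ write $X_i = \{ j : (j,i) \in X \}$ for the $i$-th coordinate block, so that $\underline{f}(X) = \sum_{i=1}^{m} \sigma^M(X_i)$. The point of this reformulation is that adding a single element $(u,i)$ to $X$ changes only the block $X_i$, with marginal value exactly $\sigma^M(X_i \cup \{u\}) - \sigma^M(X_i)$. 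Hence if $X \subseteq Y$ and $(u,i) \notin Y$, then $X_i \subseteq Y_i$ and $u \notin Y_i$, and the diminishing-returns inequality for $\underline{f}$ follows verbatim from the same inequality for $\sigma^M$. So it suffices to prove that the single-community function $\sigma^M(S)$ is submodular in $S \subseteq V$; submodularity is then preserved under the nonnegative sum over the $m$ disjoint blocks.

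Second I would prove that $\sigma^M(S)$ is submodular. Here each node of $S$ acts simultaneously as a seed and as a target, $\sigma^M(S) = \sum_{i \in S} \sum_{v \in S \setminus \{i\}} ap(v, \{i\}, MIIA_{S}(v,\theta))$. Fixing a target $v$, I would analyze the inner quantity $ap(v, B, MIIA_{S}(v,\theta))$ as a function of the seed set $B$ acting on the in-arborescence rooted at $v$, adopting the technique of Chen \emph{et al.}~\cite{chen2010scalable}: proceeding from the leaves of the arborescence toward its root $v$ and using the recursion $ap(v,B) = 1 - \prod_{w \in N^{in}(v)} (1 - ap(w,B)\, p_{wv})$, one shows by induction on the tree that $ap(v,\cdot)$ is monotone and submodular in $B$. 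The structural reason the recursion preserves submodularity is that in a tree the subtrees feeding the in-neighbours $w$ of $v$ are vertex-disjoint, so the $ap(w,\cdot)$ depend on disjoint seed contributions and the composition with the map $x \mapsto 1 - \prod_w (1 - p_{wv}\,x)$ does not destroy the property. Summing the submodular functions $ap(v,\cdot)$ over all targets $v$, and then over the seed index, keeps submodularity, which yields the claim for a single community.

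The main obstacle I expect is that, unlike the classical MIA setting where the arborescences are precomputed on the whole graph and held fixed while only the seed set varies, here the arborescence $MIIA_{S}(v,\theta)$ is itself recomputed within the community $S$, so enlarging $S$ can both introduce new source/target pairs and alter the maximum-influence paths of existing pairs. I would handle this by first observing that the arborescence structure grows monotonically, since adding $u$ to $S$ only adds candidate paths and each $pp(MIP_{S}(i,v))$ is therefore nondecreasing in $S$, and then arguing that the extra marginal contribution created by a new node $u$, namely its influence on and from the rest of the community along the MIPs, is itself a diminishing quantity as $S$ grows; this last estimation is where the bulk of the careful work lies. Once the single-community submodularity is secured through the MIA recursion, the first step packages it into the submodularity of $\underline{f}$ on the partition matroid.
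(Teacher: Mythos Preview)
Your reduction of the partition objective to the single-community function $\sigma^M$ is the same first move the paper makes, and it is sound. The divergence is in how you propose to establish submodularity of $\sigma^M$ itself. You want to invoke the tree-induction argument of Chen \emph{et al.}, but that result shows submodularity of $ap(v,B,T)$ as a function of the \emph{seed set} $B$ with the arborescence $T$ held fixed. In $\sigma^M(S)=\sum_{i\in S}\sum_{v\in S\setminus\{i\}} ap\bigl(v,\{i\},MIIA_S(v,\theta)\bigr)$ the variable $S$ never enters through the seed-set slot at all: each seed is a singleton $\{i\}$, and $S$ instead controls the two summation ranges and the arborescence. Consequently your sentence ``summing the submodular functions $ap(v,\cdot)$ over all targets $v$, and then over the seed index, keeps submodularity'' does not deliver a function submodular in $S$; you are summing functions of $B$ evaluated at fixed singletons, over index sets that themselves vary with $S$. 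You flag the changing arborescence as the obstacle, but the mismatch is more basic than that, and the resolution you sketch (``arguing that the extra marginal contribution \dots\ is itself a diminishing quantity'') is the assertion to be proved rather than a mechanism for proving it.

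The paper does not go through the Chen \emph{et al.} lemma. It expands $\sigma^M(S\cup\{q\})-\sigma^M(S)$ directly into three pieces: the influence $q$ exerts on existing nodes, the influence existing nodes exert on $q$, and the change in pairwise influence among existing nodes caused by new routes through $q$. It then argues, using the observation that on an arborescence longer paths carry smaller probability, that each of these pieces is larger when the ambient community is smaller. If you want a correct proof along your intended line you would have to abandon the tree induction and supply a direct diminishing-returns comparison of this kind; the seed-set submodularity lemma is simply not the right tool when the community, rather than a seed set on a fixed structure, is the variable.
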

\begin{proof}
	Assume there are two communities $S_a$ and $S_b$, and $S_a \subset S_b$, so we have to prove that for any node $q\notin S_b$, $\sigma^M(S_a\cup\{q\})-\sigma^M(S_a)\geq \sigma^M(S_b\cup\{q\})-\sigma^M({S_b})$, this is the condition that a function is submodular.
	
	Given a community $S_k$, for simplicity, we denote $ap(v,\{u\},MIIA_{S_k}(v,\theta))$ by $ap_{S_k}(u,v)$, which implies the probability that node $v$ receives influence from node $u$ through nodes within $MIIA_{S_k}(v,\theta)$. Thus, for community $S_a$, we have $\sigma^M(S_a\cup\{q\})-\sigma^M(S_a)=$
	\begin{equation*}
		\sum_{v\in S_a}ap_{S_a\cup\{q\}}(q,v)+\sum_{u\in S_a}ap_{S_a\cup\{q\}}(u,q)+\sum_{u,v\in S_a:u\neq v}\{ap_{S_a\cup\{q\}}(u,v)-ap_{S_a}(u,v)\}
	\end{equation*}
	where it is the sum of the probabilities that  the path must pass $q$ at least one time in community $(S_a\cup\{q\})$. Then, similarly, for community $S_b$, we have $\sigma^M(S_b\cup\{q\})-\sigma^M(S_b)=$
	\begin{equation*}
		\sum_{v\in S_b}ap_{S_b\cup\{q\}}(q,v)+\sum_{u\in S_b}ap_{S_b\cup\{q\}}(u,q)+\sum_{u,v\in S_b:u\neq v}\{ap_{S_b\cup\{q\}}(u,v)-ap_{S_b}(u,v)\}
	\end{equation*}
	
	\begin{figure}[htbp]
		\centerline{\includegraphics[width=6cm, height=4cm]{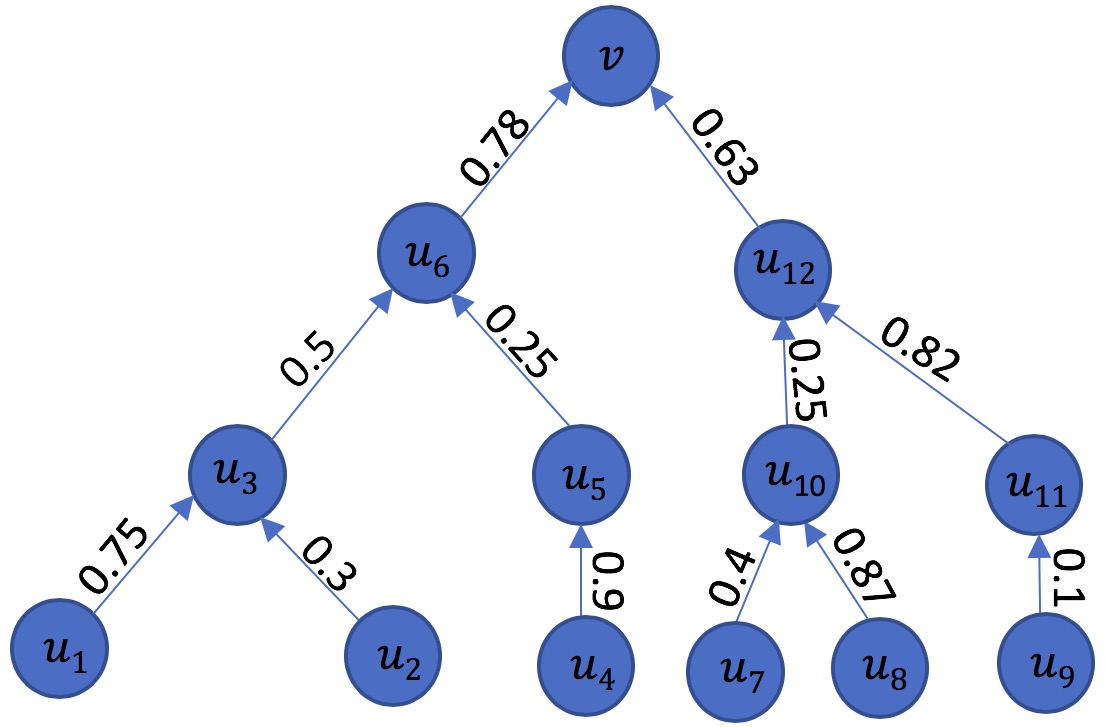}}
		\caption{An example in MIA model}\label{submodular}
	\end{figure} 
	
	Then, we can get that $\sum_{v\in S_a}ap_{S_a\cup\{q\}}(q,v)\geq\sum_{v\in S_b}ap_{S_b\cup\{q\}}(q,v)$ and $\sum_{u\in S_a}$ $ap_{S_a\cup\{q\}}(u,q)\geq\sum_{u\in S_b}ap_{S_b\cup\{q\}}(u,q)$, because $\{S_a\cup\{q\}\}$ is also the subset of  $\{S_b\cup\{q\}\}$. The addition of a new node $q$ to a community will make the structure of the community more complex. In in-arborescence (MIIA) and out-arborescence (MIOA), a longer path from node $u$ to $v$ will make the influence propagation smaller as the influence propagation probability on the edges range from 0 to 1. We can observe it from an tiny in-arborescence social network example in Fig.\ref{submodular}. So it also follows that $ap_{S_a\cup\{q\}}(u,v)-ap_{S_a}(u,v)\geq ap_{S_b\cup\{q\}}(u,$ $v)-ap_{S_b}(u,v)$ definitely. Then we can get the inequality $\sigma^M(S_a\cup\{q\})-\sigma^M(S_a)\geq \sigma^M(S_b\cup\{q\}-\sigma^M({S_b})$. Therefore, the influence propagation function $\sigma^M(\cdot)$ within each community is submodular under MIA  model. As the objective function is defined as $\underline{f}(S_1,S_2,\dots,S_m)=\sum_{i=1}^{m}\sigma^M(S_i)$. Therefore, the sum influence propagation function $\underline{f}$ in a social network for the community partition problem under MIA model is also submodular. 
\end{proof}
\section{Solution for Upper Bound  and Lower Bound}\label{solution}
In this section, we will show how to solve the upper bound and lower bound of IMCPP problem step by step. In order to simplify subsequent analysis, we need to reformulate our IMCPP problem.
\subsection{Reformulation of the IMCPP}

First, we need to introduce some basic definitions about matroid and matroid polytopes which will be used later.
\begin{definition} [Matriod polytopes]Given a matroid $\mathcal{M}=(X,\mathcal{I})$, the matroid polytope $P(\mathcal{M})$ is the convex hull of the indicators of the bases of $\mathcal{M}$ and defined as:
	\begin{equation*}
		P(\mathcal{M})=conv\{\vec{1}_I: I\in \mathcal{I}\}.
		\label{equation11}
	\end{equation*}  
	$\mathcal{I}$ is a family of subsets of ground set $X$ (called independent sets).
\end{definition}
The matroid polytopes $P(\mathcal{M})$ is down-monotone because it satisfies the property that for any $0\leq x\leq y, y\in P\Rightarrow x\in P$.

Then, we generalize the IMCPP problem to a matroid constraint, which is easier to be solved. Here, we define a new ground set $U= M\times V$, where $M$ is the community set and $V$ is the node set of the given graph. Let $A\subseteq U$ be a feasible solution, namely a feasible community partition combination. Here, $(i, j)\in A$ means that we partition the node $j$ to community $i$. As we can not partition the same node to more than one community, thus, a feasible solution satisfies the following constraint, that is
\begin{equation*}
	\forall j\in V, |\{i|(i,j)\in A\}|\leq 1
	\label{equation1}
\end{equation*}   
Then the influence function of a partition $A$ can be denoted as:
\begin{equation*}
	f(A)=\sum\limits_{i\in M}\sigma(\{j|(i,j)\in A\})
\end{equation*}   
Thus, the IMCPP can be  written as follows:

\begin{equation}
	\begin{split}
		&\quad\max\limits_{A\subseteq U}f(A)\\
		&\quad s.t. \  \forall j\in V, |\{i|(i,j)\in A\}|\leq 1
	\end{split}
	\label{equation12}
\end{equation}
Therefore, let us define a partition matroid $\mathcal{M}=(U,\mathcal{I})$ as follows:
\begin{equation*}
	\mathcal{I}=\{X\subseteq U:|X\cap(M\times \{j\})|\leq 1 \text{ for } j\in V\}
\end{equation*}
Then the IMCPP problem is equivalent to maximize $\{f(A):A\in\mathcal{I}\}$. Any set $A\in\mathcal{I}$ is called independent set. Similarly, to upper bound and lower bound, it is equivalent to maximize $\{\bar{f}(A):A\in\mathcal{I}\}$ and $\{\underline{f}(A):A\in\mathcal{I}\}$

\subsection{Relaxation of  Upper Bound $\bar{f}$}
In this section, we give a continuous relaxation for the upper bound of optimization problem shown as Equation \ref{equation12}. First, we need to introduce a continuous extension for an arbitrary set function: Lov{$\acute{a}$}sz extension. It was defined by Lov{$\acute{a}$}sz in \cite{lovasz1983submodular} first.
\begin{definition}[Lov{$\acute{a}$}sz extension]
	For a  function $h$: $2^X\rightarrow R$, $\vec x\in R^X$. Assume that the elements in ground set $X=\{v_1,v_2,\cdots,v_n\}$ are sorted from maximum to minimum such that $x_1\geq x_2\geq\cdots\geq x_n$. Let $S_i=\{v_1,\cdots,v_i\}, \forall v_i\in X$.  The Lov{$\acute{a}$}sz Extension $h^L(\vec {x}):[0,1]^X\rightarrow R$ of $h$ at $\vec x$ is defined as: 
	\begin{equation*}
		h^L(\vec{x})=\sum\limits_{i=1}^{n-1}(x_i-x_{i+1})h(S_i)+x_nh(S_n)
	\end{equation*}
	\label{definition3}
\end{definition}
Then we describe the process of relaxing $\bar{f}$. We introduce a decision variable $x_{ij}\in[0,1]$ for all $(i,j)\in M\times V$ where $x_{ij}$ is the probability that node $j$ is allocated to community $i$. Thus,
\begin{equation*}
	\sum\limits_{i\in M}x_{ij}\leq 1, j\in V
\end{equation*}  
The domain of the relaxed problem can be denoted as: 
\begin{equation}
	\label{eq2}
	P(\mathcal{M})=\{\vec x\in[0,1]^{m\times n}:\forall j\in V, \sum\limits_{i\in M}x_{ij}\leq 1\}
\end{equation}  
We use Lov{$\acute{a}$}sz extension $\bar{f}^L(\vec {x})$ to relax the influence function $\bar{f}$ as following:
\begin{equation}
	\bar{f}^L(\vec {x})=\mathbb{E}_{\lambda\sim[0,1]}[\bar{f}(\{(i,j)\in U: x_{ij}\textgreater \lambda\})]
	\label{equation6}
\end{equation}   
where $\lambda$ is uniformly random in $[0,1]$. The problem of maximizing relaxation of the upper bound can be expressed as follows:
\begin{equation}
	\begin{split}
		&\quad\max\limits_{{\vec{x}}}\bar{f}^L(\vec {x})\\
		&\quad s.t. \  \vec x\in P(\mathcal{M})
	\end{split}
	\label{equation8888}
\end{equation}  

So we transfer our goal to maximize the Lov{$\acute{a}$}sz extension $\bar{f}^L(\vec {x})$ of influence function $\bar{f}$ over a matroid polytope $P(\mathcal{M})$. A set function $h:2^X\rightarrow R$ is submodular (or supermodular) if and only if it's Lov{$\acute{a}$}sz extensions $\bar{h}^L$ is convex (or concave). 
\begin{theorem}
	The relaxation  $\bar{f}^L(\vec {x})$ of the upper bound $\bar{f}$, shown as Equation \ref{equation6}, is monotone and concave.
\end{theorem}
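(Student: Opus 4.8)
The plan is to reduce both claims to properties of $\bar{f}$ that are already in hand, namely that $\bar{f}$ is monotone nondecreasing and supermodular under the LT model (Theorem \ref{theorem99}), together with the threshold representation of the Lov{$\acute a$}sz extension in Equation \ref{equation6} and the general equivalence between supermodularity and concavity of the Lov{$\acute a$}sz extension stated just above.

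First I would establish monotonicity using the expectation form $\bar{f}^L(\vec{x})=\mathbb{E}_{\lambda\sim[0,1]}[\bar{f}(\{(i,j)\in U: x_{ij}>\lambda\})]$, which is more convenient than the sorted-sum definition because it avoids having to track how the ordering of coordinates changes as $\vec{x}$ varies. Take any two points $\vec{x},\vec{y}\in P(\mathcal{M})$ with $\vec{x}\leq\vec{y}$ coordinatewise. For every fixed threshold $\lambda\in[0,1]$ the super-level sets satisfy $\{(i,j): x_{ij}>\lambda\}\subseteq\{(i,j): y_{ij}>\lambda\}$, so monotonicity of $\bar{f}$ gives $\bar{f}(\{x_{ij}>\lambda\})\leq\bar{f}(\{y_{ij}>\lambda\})$ pointwise in $\lambda$. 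Integrating over $\lambda$ (that is, taking the expectation, which preserves pointwise inequalities) yields $\bar{f}^L(\vec{x})\leq\bar{f}^L(\vec{y})$, establishing that $\bar{f}^L$ is monotone nondecreasing on $P(\mathcal{M})$.

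For concavity I would invoke the stated equivalence: a set function is supermodular if and only if its Lov{$\acute a$}sz extension is concave. Since $\bar{f}$ is supermodular by Theorem \ref{theorem99}, its Lov{$\acute a$}sz extension $\bar{f}^L$ is concave, and in particular concave on the convex set $P(\mathcal{M})$. The main obstacle, should one not wish to simply cite the equivalence, is to prove this supermodular-to-concave direction from scratch: using Definition \ref{definition3}, one shows that on each region where the sorting order of the coordinates is fixed the extension is linear, and then verifies that across the boundary between two adjacent orderings the one-sided directional derivatives combine in the correct (concave) direction precisely because of the supermodular inequality applied to the relevant prefix sets $S_i$. This piecewise-linear gluing step is the only nontrivial analytic point; everything else is routine once the monotonicity and supermodularity of $\bar{f}$ are granted.
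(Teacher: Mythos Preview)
Your proposal is correct and follows essentially the same route as the paper: both reduce the claim to the monotonicity and supermodularity of $\bar{f}$ (Theorem~\ref{theorem99}) together with the stated equivalence that a set function is supermodular if and only if its Lov{$\acute a$}sz extension is concave. Your argument is in fact more explicit than the paper's, which simply asserts both properties follow from the prior results without spelling out the threshold-set inclusion argument for monotonicity that you give.
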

\begin{proof}
	From Theorem \ref{th1}, we know that the upper bound influence propagation function $\bar{f}$ is monotone and supermodular. Based on above above conclusion, we have its Lov{$\acute{a}$}sz extensions $\bar{f}^L(\vec {x})$ is monotone and concave.
\end{proof}

In \cite{ni2020continuous}, they already show the detail process of computing the derivative of $\hat{f}(\vec {x})$, they sort vector $\vec{x}=(x_{11},\cdots,x_{1n},\cdots,x_{i1},\cdots,x_{in},\cdots,x_{m1},\cdots,x_{mn})$ as $\vec x'=(x'_{11},\cdots,x'_{1n},\cdots,x'_{i1},\cdots,x'_{in},\cdots,x'_{m1},\cdots,x'_{mn})$. It satisfies $x'_{ij}\geq x'_{lk}$ if $i<l$ or $i=l\land j<k$. They denote $\Omega(x_{ij})=x'_{lk}$ and $\Omega^{-1}(x'_{lk})=x_{ij}$, which means that the element $x_{ij}$ in vector $\vec x$ corresponds to the element $x'_{lk}$ in sorted vector $\vec x'$. Let $\Gamma(x_{ij})=(i,j)$ and we have
\begin{equation*}
	S'_{lk}=\{\Gamma(\Omega^{-1}(x'_{11})),\cdots,\Gamma(\Omega^{-1}(x'_{1n})),\cdots,\Gamma(\Omega^{-1}(x'_{l1})),\cdots,\Gamma(\Omega^{-1}(x'_{lk}))\}
\end{equation*}
The partial derivative for $x_i$ of the Lov{$\acute{a}$}sz extensions $h^L(\vec{x})$ of a set function $h$ is
$\partial{h^L}(\vec{x})/\partial x_{i}=h(S_i)-h(S_{i-1})$ where $x_1\geq x_2\geq\cdots x_n$ and $S_i=\{1,2,\cdots,i\}$. Then they get the derivative of $\bar{f}^L$, that is
\begin{equation}
	\frac{\partial \bar{f}^L(\vec{x})}{\partial x_{ij}}=\bar{f}(S'_{lk})-\bar{f}(S'_{l(k-1)})
	\label{equation9}
\end{equation}
where $\Omega(x_{ij})=x'_{lk}$.

\subsection{Relaxation of Lower Bound $\underline{f}$}
In this section, we give a continuous relaxation for the lower bound of optimization problem shown as Equation \ref{equation12}. First, we need to introduce a continuous extension for a monotone submodular set function: Multilinear extension. It was defined in \cite{calinescu2007maximizing}.
\begin{definition}[Multilinear Extension]
	For a  monotone submodular function $h$: $2^X\rightarrow R$, $\vec x\in R^X$. The Multilinear Extension of $h$ is the function $h^c$: For $\vec{x}\in[0,1]^X\rightarrow R$, let $\vec x$ be a random vector in $\{0,1\}^X$ where each coordinate is independently rounded to 1 with probability $x_i$ or 0 otherwise. 
	\begin{equation*}
		h^c(\vec{x})=\mathbb{E}_{T\sim\vec{x}}[h(\vec{x})]=\sum\limits_{T\subseteq X}h(T)\prod\limits_{i\in T}(x_i)\prod\limits_{i\in X\backslash T}(1-x_i)
	\end{equation*}
	\label{definition4}
\end{definition}
The process of relaxing $\underline{f}(\vec {x})$ is the same as relaxing $\bar{f}(\vec {x})$. We also introduce a decision variable $x_{ij}\in[0,1]$ for all $(i,j)\in M\times V$, the feasible domain of the relaxed problem is also the same as it in $\bar{f}$, shown as Equation \ref{eq2}.

We use multilinear extension $\underline{f}^c(\vec {x})$ to relax the lower bound influence function $\underline{f}(\vec {x})$ as following:
\begin{equation}
	\underline{f}^c(\vec {x})=\mathbb{E}_{A\sim x}[\underline{f}(A)]=\sum\limits_{i\in M}\mathbb{E}_{A\sim x}[\sigma^M_i(\{j|(i,j)\in A\})]
	\label{equation666}
\end{equation}
The problem of maximizing relaxation of the lower bound can be expressed as follows:
\begin{equation}
	\begin{split}
		&\quad\max\limits_{\vec{x}}\underline{f}^c(\vec {x})\\
		&\quad s.t. \  \vec x\in P(\mathcal{M})
	\end{split}
	\label{equation6666}
\end{equation} 

So we transfer our goal to maximize the multilinear extension $\underline{f}^c(\vec {x})$ of influence function $\underline{f}$ over a matroid polytope $P(\mathcal{M})$. In \cite{vondrak2008optimal}, Vondr{\'a}k {\em et al.} has an conclusion: Let $h^c: [0,1]^X\rightarrow R$ be the multilinear extension of $h$, then: (1) If $h$ is nondecreasing, then $h^c$ is nondecreasing along any direction $d\geq 0$; (2) If $h$ is submodular then $h^c$ is concave along any line $d\geq 0$.
\begin{theorem}
	The relaxation  $\underline{f}^c(\vec {x})$ of the lower bound $\underline{f}$, shown as Equation \ref{equation666}, is monotone and concave.
\end{theorem}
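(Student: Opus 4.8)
The plan is to reduce the statement to the two structural facts about $\underline{f}$ that are already established—Lemma \ref{lemma1} (monotonicity) and Theorem \ref{theorem4} (submodularity)—and then invoke the property of the multilinear extension due to Vondr\'ak stated just above. Since the argument parallels the concavity proof for the Lov\'asz-relaxed upper bound $\bar{f}^L$, the skeleton is short; the care lies entirely in what ``concave'' is allowed to mean.

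First I would recall that $\underline{f}:2^U\to\mathbb{R}$ is nondecreasing by Lemma \ref{lemma1} and submodular by Theorem \ref{theorem4}. By the cited result of Vondr\'ak, the multilinear extension of a nondecreasing function is nondecreasing along every direction $d\geq 0$, which yields the monotonicity claim: raising any coordinate $x_{ij}$ (the probability of assigning node $j$ to community $i$) within $P(\mathcal{M})$ cannot decrease $\underline{f}^c$. For concavity, the same result gives that the multilinear extension of a submodular function is concave along every line of nonnegative direction $d\geq 0$, i.e. $t\mapsto\underline{f}^c(\vec{x}+td)$ is concave whenever $d\geq 0$. This is the precise sense in which $\underline{f}^c$ is ``concave'' here, and it is all that is needed.

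An alternative, self-contained route would exploit the additive structure $\underline{f}=\sum_{k=1}^{m}\sigma^M(S_k)$: by linearity of expectation the multilinear extension splits as $\underline{f}^c=\sum_{i\in M}(\sigma_i^M)^c$, each summand is the multilinear extension of a monotone submodular function, and both monotonicity and directional concavity are preserved under finite sums. I would keep the direct citation of Vondr\'ak as the main line and mention this decomposition only as a supporting remark.

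The one genuine subtlety—which I would flag explicitly rather than gloss over—is that the multilinear extension of a submodular function is in general \emph{not} globally concave; it is concave only along nonnegative directions (and in fact convex along directions of the form $e_i-e_j$). This is strictly weaker than the genuine concavity enjoyed by the Lov\'asz extension $\bar{f}^L$ of the supermodular upper bound, even though both theorems use the same word. The reason the weaker notion suffices downstream is that the feasible polytope $P(\mathcal{M})$ is down-monotone and the continuous greedy procedure only ever advances along nonnegative directions, so concavity along $d\geq 0$ is exactly the property the algorithm's analysis consumes. Hence I would state the theorem's ``concave'' with this qualification made precise, since that is the only place a reader could otherwise be misled.
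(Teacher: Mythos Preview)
Your proposal is correct and follows essentially the same route as the paper: cite Lemma~\ref{lemma1} and Theorem~\ref{theorem4} to establish that $\underline{f}$ is monotone and submodular, then invoke Vondr\'ak's result on the multilinear extension. Your explicit caveat that ``concave'' here means concave along nonnegative directions $d\geq 0$ (and not globally) is a clarification the paper omits, but it does not change the argument.
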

\begin{proof}
	From Lemma 	\ref{lemma1} and Theorem \ref{theorem4}, we know that the lower bound influence propagation function $\underline{f}$ is monotone and submodular. Based on above Vondr{\'a}k {\em et al.}'s conclusion, we have its multilinear extensions $\underline{f}^c(\vec {x})$ is monotone and concave.
\end{proof}

Next, we compute the derivative of $h^c$. From \cite{vondrak2008optimal}, we know that the partial derivative for $x_i$ of the multilinear extensions $h^c(\vec {x})$ of a set function $h$ is $\partial{h^c}(\vec{x})/{\partial x_{i}}=\mathbb{E}[h(R\cup\{i\})]-\mathbb{E}[h(R)]$ where $R$ be the random subset of $X\backslash{i}$ and each element $j\in X\backslash{i} $ is included with probability $x_j$. Then we can get the derivative of $\underline{f}^c(\vec {x})$, that is :
\begin{equation}
	\frac{\partial\underline{f}^c(\vec {x})}{\partial x_{ij}}=\mathbb{E}[\underline{f}(R\cup(i,j))]-\mathbb{E}[\underline{f}(R)]
	\label{equation90}
\end{equation}
where $R$ is a random subset of $M\times V\backslash(i,j)$ sampled from $\vec{x}$.

\subsection{The Continuous Greedy Process}\label{upper}
Based on the monotone and concave property of $\bar{f}^L(\vec {x})$ and $\underline{f}^c(\vec{x})$, we can design a continuous greedy process and produce a set $\vec x\in P(\mathcal{M})$ which approximates the optimum solution $OPT^L=\max\{\bar{f}^L(\vec {x}): \vec x\in P(\mathcal{M})\}$ and $OPT^c=\max\{\underline{f}^c(\vec {x}):\vec x\in P(\mathcal{M})\}$ separately. The vector moves in direction constrained by $P(\mathcal{M})$ until it achieves a local maximum gain. Through observing Equation \ref{equation9} and Equation \ref{equation90}, we can get that the derivative of $\bar{f}^L(\vec {x})$ for $x_{ij}$ just equals the marginal gain of influence propagation when partitioning node $j$ to community $i$ as $\Omega(x_{ij})=x'_{lk}$; the derivative of $\underline{f}^c(\vec {x})$ for $x_{ij}$ equals the marginal gain of influence propagation when partitioning node $j$ to random community $R$ sampled from $M\times V\backslash(i,j)$. The derivative of $\bar{f}^L(\vec {x})$ and $\underline{f}^c(\vec {x})$ have the same meaning. The optimization framework, continuous greedy process, has uniform format to the relaxation of upper bound and lower bound, except the derivative definition.

Let $\vec x$ start from $\vec x(0)=\vec 0$ and follow a certain flow over a unit time interval. At time step $t$, we define
\begin{equation*}
	\frac{d\vec x(t)}{dt}=\vec v_{max}(\vec x(t)),
\end{equation*} 
To upper bound, we set $\vec v_{max}(\vec x(t))$ as
\begin{equation}
	\vec v_{max}(\vec x(t))=\arg\max\limits_{v\in P}(v\cdot\nabla\bar{f}^L({\vec x(t)}))
\end{equation} 
To lower bound, we set $\vec v_{max}(\vec x(t))$ as
\begin{equation}
	\vec v_{max}(\vec x(t))=\arg\max\limits_{v\in P}(v\cdot\nabla\underline{f}^c({\vec x(t)}))
\end{equation} 
where $\vec v_{max}(\vec x)$ denotes that when an element $j$ is added to community $i$ at time $t$, the direction in which the rate of change of the tangent line of function $\bar{f}^L({\vec x(t)})$ ($\underline{f}^c({\vec x(t)})$) is greatest. Based on the Equation \ref{equation9} and Equation \ref{equation90}, we know that this can bring the greatest gain for the influence propagation function $\bar{f}$ and $\underline{f}$. At any time step $t\in[0,1]$, we have
\begin{equation}
	\vec x(t)=\int_0^t\frac{d\vec x(\tau)}{d\tau}d\tau=\int_0^t \vec v_{max}(\vec x(\tau))d\tau
	\label{equation66}
\end{equation}
Next, we propose the continuous greedy algorithm, which can be used to solve the problem of maximizing the upper bound or lower bound respectively, which is shown in Algorithm \ref{alg11}.
\begin{algorithm}[!t]
	\caption{\textbf{Continuous Greedy Algorithm}}
	\begin{algorithmic}[1]
		\Input  Graph $G$, $\mathcal{M}=(U,\mathcal{I})$, $h:[0,1]^{m\times n}\rightarrow R$
		\Output{$\vec x(1)$}
		\State Initialize $\vec x(0) = \vec 0$	
		\For {each $t\in[0,1]$} 
		\State For each $(i,j)\in M\times V$, calculate $w_{ij}(t)=\partial h(\vec{x}(t))/\partial x_{ij}$
		\State $\vec v_{max}(\vec x(t))=\arg\max\limits_{\vec v\in P}(\vec v\cdot \vec{w}(t))$
		\State Increase $\vec x(t)$ at a rate of $\vec v_{max}(\vec x(t))$		
		\EndFor
		\State\Return $\vec x(1)$
	\end{algorithmic}
	\label{alg11}
\end{algorithm}

In this algorithm, $t$ ranges from $0$ to $1$. For each time step, we need to calculate the value of $w_{ij}(t)$ which is the gradient of $\bar{f}^L(\vec {x})$ for the upper bound problem or $\underline{f}^c({\vec x(t)})$ for the lower bound problem. The step 4 shows that $\vec v_{max}(\vec x(t))$ always equals the vector $\vec{v}\in P$ such that maximizing $\vec{v}\cdot \vec{w}(t)$ in every iteration. It also means that we find the maximum marginal gain of $\bar{f}^L(\vec {x})$ ($\underline{f}^c({\vec x(t)})$) if updating $\vec{x}(t)$ along with direction $\vec v_{max}(\vec x(t))$. Then $\vec x(t)$ increases at the rate of $\vec v_{max}(\vec x(t))$ obtained in step 4. After the for loop, we get the value of $\vec x(1)$ which is a convex combination of independent sets. 

For the upper bound relaxation $\bar{f}^L(\vec {x})$, we obtain vector $\vec x'(t)$ by sorting vector $\vec x(t)$ from maximum to minimum, then obtain gradient vector $\vec w(t)$ according to Equation \ref{equation9}. But for $\underline{f}^c({\vec x(t)})$, we need to estimate $w_{ij}(t)=\mathbb{E}[\underline{f}(R\cup(i,j))-\underline{f}(R)]$ for each $(i,j)\in M\times V$ by taking a large number of independent samples. For $\underline{f}^c({\vec x(t)})$, we need to simplify the graph $G(V,E)$ of a social network based on the MIA model at first.

After that, we have obtained a fractional vector returned by the continuous greedy process. Then, we take the fractional solution $\vec x(1)$ of $\bar{f}^L$ and apply randomized rounding techniques: partitioning node $j$ to community $i$ with the probability $x_{ij}(1)$ independently and guaranteeing that each node can just belong to one community at most, i.e. setting $x_{ij}=1$ and $x_{kj}=0$ for $k\neq i$ with the probability $x_{ij}(1)$ exclusively. For $\underline{f}^c$, we use pipage rounding introduced by A.A.Ageev {\em et al.} \cite{ageev2004pipage} to convert the fractional vector to integer solution.
\subsection{Discrete Implementation}
Actually, the continous greedy algorithm solves our objective function by calculating the integral, shown as Equation \ref{equation66}. But it is hard to implement usually. So in this section, we discretize the continuous greedy algorithm. Given the time step $\Delta t$, the discrete version is shown as follows:

\begin{enumerate}
	\item Start with $t=0$ and $\vec x(0)=\vec 0$.
	\item Obtain $\vec w(t)$.
	\item Let $I^*(t)$ be the maximum-weight independent set in $\mathcal{I}$ according to $\vec w(t)$.
	\item $\vec x(t+\Delta t)\leftarrow \vec x(t)+\vec1_{I^*(t)}\cdot\Delta t$.
	\item Increment $t=t+\Delta t$; if $t<1$, go back to step 2; Otherwise, return $\vec x(1)$.
\end{enumerate}
\noindent
where $\vec w(t)$ denotes the gradient of $\bar{f}^L(\vec {x})$ for the upper bound problem or $\underline{f}^c({\vec x(t)})$ for the lower bound problem. Because $v_{max}(\vec x(t))\in P$ and $\vec w(t)$ is non-negative, $\vec v_{max}(\vec x(t))$ corresponds to a base of matroid $\mathcal{M}$. In other words, we find a $I^*(t)\in\mathcal{I}$ such that
\begin{equation*}
	I^*(t)\in\arg\max_{I(t)\in\mathcal{I}}(w(t)\cdot \vec 1_{I(t)})
\end{equation*}
where $I^*(t)$ is the maximum-weight independent set at time step $t$, which can be obtained by hill-climbing strategy. Then, $t$ increases discretely by $\Delta t$ in each step. Until getting the vector $\vec x(1)$, the algorithm terminates. 

After we obtain the fraction solution $\vec x(1)$ returned by discretized continuous greedy process, we still have to convert it to integer solution with randomized rounding for $\bar{f}^L(\vec {x})$ and pipage rounding for $\underline{f}^c(\vec {x})$, respectively.
\subsection{Theoretical Guarantee for Upper Bound and Lower Bound}

In this section, we show that the returned vector by Algorithm \ref{alg11} is an approximate solution for the upper bound problem in Equation \ref{equation12} and the lower bound problems in Equation \ref{equation6666}. 

In \cite{ni2020continuous}, they prove the approximation ratio and the algorithm complexity for $\bar{f}^L({\vec x})$. The conclusions are shown by Theorem \ref{theorem777} and Theorem	\ref{theorem3}.
\begin{theorem}
	Algorithm \ref{alg11} returns a $(1-\frac{1}{e})$-approximation (in expectation) for the upper bound problem, shown as Equation \ref{equation8888}.
	\label{theorem777}
\end{theorem}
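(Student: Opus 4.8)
The plan is to analyze the continuous greedy trajectory $\vec{x}(t)$ directly and show that the value $\phi(t) := \bar{f}^L(\vec{x}(t))$ obeys a differential inequality whose solution yields the $1-\tfrac{1}{e}$ guarantee; the discretization and rounding are then treated separately. Throughout I would lean on the two facts established earlier: that $\bar{f}^L$ is monotone and concave (so its gradient satisfies $\nabla\bar{f}^L(\vec{x})\geq\vec{0}$ and the first-order inequality for concave functions holds), and that the feasible region $P(\mathcal{M})$ is a down-monotone polytope containing the optimal fractional point. Let $\vec{x}^*\in P(\mathcal{M})$ attain $OPT^L=\bar{f}^L(\vec{x}^*)$.

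First I would establish the per-step progress bound $\frac{d\phi(t)}{dt}\geq OPT^L-\phi(t)$. Since $\vec{v}_{max}(\vec{x}(t))$ maximizes $\vec{v}\cdot\nabla\bar{f}^L(\vec{x}(t))$ over all $\vec{v}\in P(\mathcal{M})$ and $\vec{x}^*\in P(\mathcal{M})$, greedy optimality gives $\vec{v}_{max}(\vec{x}(t))\cdot\nabla\bar{f}^L(\vec{x}(t))\geq\vec{x}^*\cdot\nabla\bar{f}^L(\vec{x}(t))$. Concavity of $\bar{f}^L$ supplies $OPT^L=\bar{f}^L(\vec{x}^*)\leq\phi(t)+(\vec{x}^*-\vec{x}(t))\cdot\nabla\bar{f}^L(\vec{x}(t))$, and since monotonicity forces $\nabla\bar{f}^L(\vec{x}(t))\geq\vec{0}$ while $\vec{x}(t)\geq\vec{0}$, the term $\vec{x}(t)\cdot\nabla\bar{f}^L(\vec{x}(t))$ is nonnegative and may be dropped. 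Combining these with $\frac{d\phi}{dt}=\vec{v}_{max}(\vec{x}(t))\cdot\nabla\bar{f}^L(\vec{x}(t))$ yields the claimed inequality.

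Next I would integrate it. Rewriting the inequality as $\frac{d}{dt}\left(e^{t}\phi(t)\right)\geq OPT^L\,e^{t}$ and integrating over $[0,1]$ with the initial value $\phi(0)=\bar{f}^L(\vec{0})=0$ gives $e\,\phi(1)\geq OPT^L(e-1)$, i.e. $\bar{f}^L(\vec{x}(1))\geq\left(1-\tfrac{1}{e}\right)OPT^L$. To transfer this to the discrete implementation I would bound the error from replacing the flow by steps of size $\Delta t$ (each moving along a maximum-weight base $\vec{1}_{I^*(t)}$), showing the loss is $O(\Delta t)$ and hence negligible for small $\Delta t$; and to produce an integral partition I would argue that the randomized per-node rounding of $\vec{x}(1)$ preserves the objective in expectation, using that independent rounding evaluates $\bar{f}$ at least as high as its Lov\'asz extension because $\bar{f}$ is supermodular.

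The main obstacle I anticipate is not the ODE but these approximation-preservation steps. The Lov\'asz extension is only piecewise linear, so $\nabla\bar{f}^L$ fails to exist at ties between sorted coordinates; making the concavity argument rigorous requires working with subgradients (or a consistent tie-breaking/perturbation) and verifying that $\phi(t)$ is absolutely continuous along the trajectory. Likewise, controlling the discretization needs a uniform bound on how much $\nabla\bar{f}^L$ can change over a single step, and the rounding step needs the precise inequality relating the expected rounded value of a supermodular function to its Lov\'asz extension. These are exactly the technical points carried out in \cite{ni2020continuous}, which I would invoke rather than re-derive.
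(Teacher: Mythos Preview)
Your argument is correct and is precisely the standard continuous-greedy analysis (concavity plus monotonicity yields the differential inequality $\tfrac{d\phi}{dt}\geq OPT^L-\phi(t)$, which integrates to the $1-1/e$ bound). The paper itself does not give a proof of this theorem at all: it simply states that the approximation ratio is proved in \cite{ni2020continuous} and records the conclusion. So you are supplying substantially more than the paper does, and what you supply is exactly the argument one would expect the cited reference to contain; your closing remark that the non-smoothness, discretization, and rounding details are delegated to \cite{ni2020continuous} matches the paper's treatment.

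One small caveat worth tightening if you write it out in full: feasibility of $\vec{x}(1)$ relies on $P(\mathcal{M})$ being closed under averages of bases (so $\vec{x}(1)=\int_0^1 \vec{v}_{max}(\vec{x}(\tau))\,d\tau$ stays in $P(\mathcal{M})$), which you use implicitly but do not state. Otherwise the proposal is sound.
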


\begin{theorem}
	The complexity of discretized continuous greedy for $\bar{f}^L({\vec x})$ is upper bounded by $O((\log(mn) + mn|E|r)/\Delta t)$.
	\label{theorem3}
\end{theorem}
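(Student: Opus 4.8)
The plan is to bound the running time of the discretized continuous greedy (the five-step procedure of the previous subsection) as the product of the number of outer iterations and the cost of a single iteration. Since $t$ starts at $0$ and is incremented by $\Delta t$ until it reaches $1$, the outer loop executes exactly $O(1/\Delta t)$ times; hence it suffices to show that each iteration costs $O(\log(mn) + mn|E|r)$ and then multiply.

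Within one iteration there are two tasks: assembling the gradient $\vec w(t)$ and extracting the maximum-weight independent set $I^*(t)$. For the gradient I would use the closed form in Equation \ref{equation9}, $w_{ij}(t)=\bar f(S'_{lk})-\bar f(S'_{l(k-1)})$, which exhibits each of the $mn$ coordinates as the marginal gain of a single node-to-community assignment relative to a sorted prefix. Building these prefixes requires keeping the $mn$ entries of $\vec x(t)$ in sorted order; because step~4 of the procedure only perturbs the $O(n)$ coordinates lying in the previously selected independent set, I would maintain the order in a balanced search structure so that the re-sorting between consecutive steps costs only $O(\log(mn))$ rather than a full $O(mn\log(mn))$ sort. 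Each individual marginal gain is the change of the within-community influence $\sigma_{LT}$ when one node is inserted, and by the path-sum characterization of the LT influence established in the proof of Theorem \ref{th1} this update can be evaluated by traversing the affected arborescence, which I would bound by $O(|E|r)$ with $r$ the structural parameter inherited from \cite{ni2020continuous}. Summing over the $mn$ coordinates yields $O(mn|E|r)$ for the whole gradient, so the gradient step costs $O(\log(mn)+mn|E|r)$.

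For the second task, note that $\mathcal{M}$ is a partition matroid, so its maximum-weight independent set decomposes node by node: for each $j\in V$ one simply selects the community $i$ that maximizes $w_{ij}(t)$. This is an $O(mn)$ scan, and the subsequent vector update in step~4 touches only $O(n)$ coordinates; both are dominated by the $O(mn|E|r)$ gradient cost. Combining the two tasks gives a per-iteration bound of $O(\log(mn)+mn|E|r)$, and multiplying by the $O(1/\Delta t)$ iterations delivers the claimed $O((\log(mn)+mn|E|r)/\Delta t)$.

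I expect the delicate step to be the $O(|E|r)$ bound on a single marginal-gain evaluation: one must argue that inserting a node into a community and recomputing that community's internal LT influence propagation can be \emph{localized} to the affected path/arborescence structure, rather than recomputing all-pairs influence from scratch. Making this localization precise, and pinning down exactly how $r$ enters as the relevant arborescence or path-length parameter, is where the careful accounting lies; the iteration count and the partition-matroid oracle are comparatively routine.
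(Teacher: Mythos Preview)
Your high–level decomposition (count $1/\Delta t$ outer iterations and bound one iteration) is exactly what the paper does, and your treatment of the partition-matroid oracle is fine. The gap is in what $r$ means and, consequently, in how the $O(|E|r)$ term arises.

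In this paper $r$ is not a structural or arborescence parameter at all: it is the number of Monte Carlo simulations used to estimate the influence function (cf.\ the proof of Theorem~\ref{theorem0}, ``the running time of $\underline{f}(\{v\})$ given a node $v$ is $O(|E|r)$ where $r$ is the number of Monte Carlo simulations,'' and the experimental setup where this count is fixed to $500$). One evaluation of $\bar f$ is estimated by running the LT diffusion $r$ times, each run touching at most $|E|$ edges, hence $O(|E|r)$; there are $mn$ coordinates $w_{ij}(t)$ to fill in via Equation~\ref{equation9}, giving $O(mn|E|r)$ per iteration. Your proposal instead tries to obtain $O(|E|r)$ from a deterministic ``localized'' recomputation of $\sigma_{LT}$ through path sums and arborescence traversals. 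That is not the mechanism here, and the localization you sketch (incrementally updating all-pairs LT influence inside a community when a single node is inserted) is not established anywhere in the paper; the proof of Theorem~\ref{th1} gives a path-sum inequality, not an $O(|E|r)$ update algorithm. So the step you yourself flag as ``delicate'' is in fact the wrong mechanism, and the uncertainty about ``how $r$ enters'' disappears once you read $r$ as the Monte Carlo count.

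Two smaller points. First, the paper does not actually prove Theorem~\ref{theorem3}; it imports it from \cite{ni2020continuous} and only proves the lower-bound analogue (Theorem~\ref{theorem0}), whose template is what you should mirror. Second, your justification of the $\log(mn)$ term is shaky: step~4 perturbs up to $n$ coordinates (one per $j\in V$), so maintaining the sorted order in a balanced structure would cost $O(n\log(mn))$, not $O(\log(mn))$. The additive $\log(mn)$ in the stated bound is in any case dominated by $mn|E|r$, and the paper's own accounting (as seen in Theorem~\ref{theorem0}) treats the sorting/sampling overhead loosely; you should not over-engineer an argument to match that particular summand.
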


Then, we can get the results of $\underline{f}^c(\vec{x})$, shown by the following theorems, which can be inferred from \cite{calinescu2007maximizing} directly.
\begin{theorem}
	When $\underline{f}^c(\vec{x})$ is the multilinear extension of the lower bound influence propagation $\underline{f}$ for IMCPP, $\vec x(1)$ returned by Algorithm \ref{alg11} satisfies:
	$\vec x(1)\in P$ and $\underline{f}^c({\vec x(1)})\geq (1-\frac{1}{e})\cdot OPT^c$
	\label{theorem1}
\end{theorem}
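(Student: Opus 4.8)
The plan is to adapt the classical continuous-greedy analysis of \cite{calinescu2007maximizing} to our matroid-constrained setting, exploiting the two facts already established: that $\underline{f}$ is monotone and submodular (Lemma~\ref{lemma1} and Theorem~\ref{theorem4}), so that by Vondr\'ak's conclusion its multilinear extension $\underline{f}^c$ is nondecreasing along every nonnegative direction and concave along every nonnegative line. Feasibility $\vec x(1)\in P(\mathcal{M})$ is the easy half: since $\vec x(0)=\vec 0$ and each update of Algorithm~\ref{alg11} adds $\vec 1_{I^*(t)}\cdot\Delta t$ with $I^*(t)$ an independent set of $\mathcal{M}$, the terminal point $\vec x(1)=\int_0^1\vec v_{max}(\vec x(\tau))\,d\tau$ is an average of indicators of independent sets, and since $P(\mathcal{M})$ is convex this convex combination lies in $P(\mathcal{M})$.

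For the approximation guarantee, let $O$ be an optimal integral partition, so that $\vec 1_O\in P(\mathcal{M})$ and $\underline{f}^c(\vec 1_O)=OPT^c$, and write $g(t)=\underline{f}^c(\vec x(t))$. The central step is to lower bound the instantaneous gain $dg/dt=\vec v_{max}(\vec x(t))\cdot\nabla\underline{f}^c(\vec x(t))$. Because $\vec v_{max}(\vec x(t))$ maximizes $\vec v\cdot\nabla\underline{f}^c(\vec x(t))$ over $\vec v\in P(\mathcal{M})$ and $\vec 1_O$ is feasible, we obtain $dg/dt\geq\vec 1_O\cdot\nabla\underline{f}^c(\vec x(t))$. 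Setting $\vec d=(\vec x(t)\vee\vec 1_O)-\vec x(t)\geq\vec 0$, monotonicity of $\underline{f}^c$ gives $\nabla\underline{f}^c(\vec x(t))\geq\vec 0$ while $\vec 1_O\geq\vec d$ coordinatewise, hence $\vec 1_O\cdot\nabla\underline{f}^c(\vec x(t))\geq\vec d\cdot\nabla\underline{f}^c(\vec x(t))$; concavity of $\underline{f}^c$ along the nonnegative direction $\vec d$ then yields $\vec d\cdot\nabla\underline{f}^c(\vec x(t))\geq\underline{f}^c(\vec x(t)\vee\vec 1_O)-\underline{f}^c(\vec x(t))$, and a final use of monotonicity gives $\underline{f}^c(\vec x(t)\vee\vec 1_O)\geq\underline{f}^c(\vec 1_O)=OPT^c$. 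Chaining these produces the differential inequality $dg/dt\geq OPT^c-g(t)$.

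Finally I would solve this inequality. Multiplying by the integrating factor $e^t$ gives $\frac{d}{dt}\bigl(e^tg(t)\bigr)\geq e^t\,OPT^c$; integrating over $[0,1]$ and using $g(0)=\underline{f}^c(\vec 0)\geq 0$ yields $e\,g(1)\geq(e-1)\,OPT^c$, that is $\underline{f}^c(\vec x(1))\geq(1-\tfrac1e)\,OPT^c$. I expect the main obstacle to be precisely the chain of inequalities that bounds $\vec 1_O\cdot\nabla\underline{f}^c(\vec x(t))$ below by $OPT^c-g(t)$, since this is where the monotone-submodular structure of $\underline{f}$, transported through the multilinear extension, must be used carefully; the ODE step afterwards is routine. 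A secondary complication is that Algorithm~\ref{alg11} realizes the flow only approximately---with a finite time step $\Delta t$ and with each coordinate $w_{ij}(t)=\mathbb{E}[\underline{f}(R\cup(i,j))-\underline{f}(R)]$ estimated by sampling---so the clean bound above must absorb a lower-order error, which is why the guarantee is stated in expectation.
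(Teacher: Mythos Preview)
Your proof is correct and is precisely the standard continuous-greedy analysis of Calinescu--Chekuri--P\'al--Vondr\'ak; the paper itself does not give a proof but simply states that the theorem ``can be inferred from \cite{calinescu2007maximizing} directly,'' so you have in effect supplied the argument that the paper outsources. One minor remark: you assert $\underline{f}^c(\vec 1_O)=OPT^c$ for an integral optimum $O$, which tacitly uses that the fractional optimum of the multilinear extension over $P(\mathcal{M})$ is attained integrally (a consequence of pipage rounding); alternatively one can take a fractional optimizer $\vec x^*\in P(\mathcal{M})$, observe that $\vec d=(\vec x(t)\vee\vec x^*)-\vec x(t)\leq\vec x^*$ lies in $P(\mathcal{M})$ by down-monotonicity, and compare $\vec v_{max}$ directly to $\vec d$---this avoids the extra step and matches the original reference more closely.
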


In the second stage of Algorithm \ref{alg11}, we use pipage rounding to convert the fractional solution to integer solution. As we know that the relationship between the result of pipage rounding $P(\underline{f}^c({\vec x}))$ and the continuous solution $\underline{f}^c({\vec x})$  over a matroid constraint is $P(\underline{f}^c({\vec x}))\geq\underline{f}^c({\vec x})$ \cite{calinescu2007maximizing}. So the final result of Algorithm \ref{alg11} we present compared with the optimal solution is $P(\underline{f}^c({\vec x}))\geq (1-\frac{1}{e})\cdot OPT^c$. Thus, we have

\begin{theorem}
	Algorithm \ref{alg11} returns a $(1-\frac{1}{e})$-approximation (in expectation) for the lower bound problem, Equation \ref{equation6666}.
	\label{theorem2}
\end{theorem}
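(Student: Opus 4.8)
The plan is to derive the $(1-1/e)$ guarantee by composing two facts that are already established: the fractional guarantee of the continuous greedy process and the value-preservation property of pipage rounding. First I would invoke Theorem~\ref{theorem1}, which certifies that the vector $\vec{x}(1)$ produced by Algorithm~\ref{alg11} is feasible, $\vec{x}(1)\in P(\mathcal{M})$, and satisfies $\underline{f}^c(\vec{x}(1)) \geq (1-\tfrac{1}{e})\cdot OPT^c$, where $OPT^c = \max\{\underline{f}^c(\vec{x}):\vec{x}\in P(\mathcal{M})\}$. This disposes of the continuous side, so the only remaining work is to argue that converting $\vec{x}(1)$ into an integral community partition does not destroy this bound.

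For the rounding step I would rely on the pipage rounding procedure of \cite{calinescu2007maximizing} applied to the partition matroid $\mathcal{M}$. The property I need is that each elementary pipage move can be taken in a direction along which $\underline{f}^c$ does not decrease, and that the procedure terminates at a vertex of $P(\mathcal{M})$, i.e. the indicator of an independent set. Writing $P(\underline{f}^c(\vec{x}))$ for the value returned by rounding, this yields $P(\underline{f}^c(\vec{x}(1))) \geq \underline{f}^c(\vec{x}(1))$. Chaining the two inequalities then gives
\begin{equation*}
P(\underline{f}^c(\vec{x}(1))) \;\geq\; \underline{f}^c(\vec{x}(1)) \;\geq\; \left(1-\tfrac{1}{e}\right)\cdot OPT^c .
\end{equation*}
Finally, since every integral feasible solution $\vec{1}_A$ with $A\in\mathcal{I}$ lies in $P(\mathcal{M})$ and $\underline{f}^c$ agrees with $\underline{f}$ on integral points, the continuous optimum dominates the discrete optimum, so $OPT^c \geq \max\{\underline{f}(A):A\in\mathcal{I}\}$; hence the rounded partition is a $(1-\tfrac{1}{e})$-approximation (in expectation) for the lower bound problem in Equation~\ref{equation6666}. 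The "in expectation" qualifier tracks the randomization already present in the continuous greedy, whose gradient $\partial\underline{f}^c/\partial x_{ij}$ is estimated by sampling, and in the (possibly randomized) pipage rounding.

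The hard part of this argument is not the chaining, which is routine, but justifying the non-decrease property of pipage rounding, namely that $\underline{f}^c$ is concave along each coordinate-pair direction used by a pipage step. This is precisely where the monotonicity (Lemma~\ref{lemma1}) and submodularity (Theorem~\ref{theorem4}) of $\underline{f}$ are indispensable: they are exactly the hypotheses under which the analysis of \cite{calinescu2007maximizing} guarantees the existence of a non-worsening pipage direction at each step. Rather than re-deriving this concavity from scratch, I would verify only that our partition matroid $\mathcal{M}$ and the multilinear extension $\underline{f}^c$ meet the conditions of that analysis and then cite it, keeping the proof short.
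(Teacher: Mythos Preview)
Your proposal is correct and follows essentially the same route as the paper: invoke Theorem~\ref{theorem1} for the fractional $(1-1/e)$ guarantee, then cite \cite{calinescu2007maximizing} for the inequality $P(\underline{f}^c(\vec{x}))\geq \underline{f}^c(\vec{x})$ under the partition matroid, and chain. Your write-up is actually more careful than the paper's, which stops at $P(\underline{f}^c(\vec{x}))\geq (1-\tfrac{1}{e})\cdot OPT^c$ without explicitly noting that $OPT^c$ dominates the integral optimum or spelling out why submodularity and monotonicity of $\underline{f}$ are the hypotheses that make pipage rounding non-worsening.
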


Here we will discuss the complexity of the proposed algorithm. The complexity is relatively high for large scale social networks.
\begin{theorem}
	The complexity of discretized continuous greedy for $\underline{f}^c$ is upper bounded by $O((\log(C) + n|E|r)/\Delta t)$.
	\label{theorem0}
\end{theorem}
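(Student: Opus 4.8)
The plan is to bound the total running time as the product of the number of outer iterations and the cost incurred within a single iteration of the discretized continuous greedy process of Section \ref{upper}. Since the algorithm starts at $t=0$ and advances $t$ by $\Delta t$ until $t$ reaches $1$, the outer loop executes exactly $1/\Delta t$ times. It therefore suffices to show that each iteration costs $O(\log(C) + n|E|r)$, and the claimed bound follows immediately by multiplication. Within one iteration the work splits into two parts: (i) forming the gradient vector $\vec w(t)$ of the multilinear extension $\underline{f}^c$, and (ii) extracting the maximum-weight independent set $I^*(t)$ of the partition matroid $\mathcal{M}$ with respect to $\vec w(t)$.

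For part (i), I would start from the derivative formula in Equation \ref{equation90}, namely $\partial \underline{f}^c/\partial x_{ij}=\mathbb{E}[\underline{f}(R\cup(i,j))]-\mathbb{E}[\underline{f}(R)]$, and estimate each expectation by averaging over independent samples $R\sim\vec{x}(t)$. Every sample requires a single evaluation of $\underline{f}=\sum_{k}\sigma^M(S_k)$ under the MIA model, which reduces to propagating activation probabilities along the arborescences $MIIA_{S_k}(v,\theta)$ and the symmetric $MIOA$; since each arborescence is a tree whose relevant edges are bounded by $O(|E|)$ and whose size contributes the rank factor $r$, one full evaluation of $\underline{f}$ costs $O(|E|r)$. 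The crucial point, and the reason the bound here carries $n|E|r$ rather than the $mn|E|r$ of the upper bound in Theorem \ref{theorem3}, is that the MIA decomposition lets the marginal gains be aggregated per node $j\in V$ rather than per pair $(i,j)\in M\times V$: once a sampled partition is fixed, the contribution of node $j$ is computed inside its own community only, so the $n$ nodes are processed at cost $O(|E|r)$ each, giving $O(n|E|r)$ in total for the gradient.

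For part (ii), finding the maximum-weight base of the partition matroid amounts to selecting, for each node $j$, the community index $i$ maximizing $w_{ij}(t)$, which can be organized with a sorting/selection structure at cost $O(\log(C))$, where $C$ bounds the number of candidate weights maintained during the Monte-Carlo estimation. Adding the two parts yields $O(\log(C)+n|E|r)$ per iteration, and multiplying by the $1/\Delta t$ iterations gives the stated bound. I expect the main obstacle to be the careful accounting in part (i): one must justify that a single $\underline{f}$-evaluation under MIA is genuinely $O(|E|r)$ and, more delicately, that the per-node aggregation is valid so that the factor $m$ does not reappear. Once the per-node organization of the MIA marginal computation is established, the remaining multiplication is routine and mirrors the proof of Theorem \ref{theorem3} in \cite{ni2020continuous}.
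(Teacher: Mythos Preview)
Your outer structure is right --- $1/\Delta t$ iterations, each dominated by step (2) --- and that matches the paper. The gap is in how you account for the two terms inside step (2): you have misread what $C$ and $r$ stand for, and as a result you place the $\log(C)$ contribution in the wrong part of the algorithm.

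In the paper, $C$ is the number of independent samples $R\sim\vec{x}(t)$ drawn to estimate each $w_{ij}(t)$ via Equation \ref{equation90}; the first summand in the per-iteration cost comes from this sampling, not from any matroid selection. Likewise $r$ is not a ``rank factor'' of the arborescences: the paper estimates $\underline{f}(\cdot)$ by Monte Carlo simulation, and $r$ is simply the number of Monte Carlo rounds, so evaluating $\underline{f}(\{v\})$ for a single node costs $O(|E|r)$. Summing over the (average) $n$ nodes gives the $n|E|r$ term. Your argument that the MIA structure lets one propagate deterministically along arborescences and thereby avoid the factor $m$ is plausible in spirit, but it is not the mechanism the paper uses; the paper's saving from $mn$ to $n$ is just that only $n$ node-level evaluations are charged. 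Finally, your claim that extracting the maximum-weight base of the partition matroid costs $O(\log C)$ is not justified: that step is a per-node argmax over $m$ choices and has nothing to do with $C$. Reassign the $\log(C)$ (the paper's proof actually writes $O(C)$ here) to the sampling step and interpret $r$ as the Monte Carlo count, and your argument aligns with the paper's.
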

\begin{proof}
	First, at step (2), we take $C$ samples to estimate $w_{ij}(t)$, the complexity is $O(C)$. Then, we estimate the objective function $\underline{f}(\cdot)$ by Monte Carlo simulations, the running time of $\underline{f}(\{v\})$ given a node $v$ is $O(|E|r)$ where $r$ is the number of Monte Carlo simulations. The average number of node is $n$, thus, the total running time of step (2) is $O(C+ n|E|r)$.
	
	The running time of Discretized continuous greedy is determined by its step (2), so we have its time complexity $O((\log(C) + n|E|r)/\Delta t)$.
\end{proof}

\section{Sandwich Approximation Framework}\label{sandwich}
In the Sandwich approximation framework, we need to obtain a high-quality solution to the original problem $f$ first. We propose a simple greedy algorithm as a heuristic solution for IMCPP.
\subsection{Simple Greedy Algorithm}
At each step, it selects node and community pair $(i,j)$ from $M\times V$ such that partitioning node $j$ to community $i$ obtains the maximum increase to the overall influence. We repeat this until all the nodes are partitioned to communities. The pseudocode of simple greedy algorithm is shown in Algorithm \ref{alg3}.

\begin{algorithm}[!t]
	\caption{\textbf{Simple Greedy Algorithm}}
	\begin{algorithmic}[1]
		\Input  Graph $G$, Number of community $m$, Objective function $f$,
		\Output $A\in\mathcal{I}$
		\State Initialize $A\leftarrow\emptyset$
		\While {true}
		\State $I\leftarrow\{(i,j)|S\cup(i,j)\in\mathcal{I}\}$
		\If {$I=\emptyset$}
		\State break
		\EndIf
		\State $(i^*,j^*)\leftarrow \arg\max_{(i,j)\in I}f(A\cup(i,j))-f(A)$
		\State $A\leftarrow A\cup(i^*,j^*)$
		\EndWhile 
		\State\Return $A$	
	\end{algorithmic}
	\label{alg3}
\end{algorithm}

It is obvious that the simple Greedy partition node $j$ to community $i$ that maximize $(f(A\cup(i,j))-f(A))$ which has an unbounded approximation factor. Therefore, the Greedy algorithm is not a very good choice to solve IMCPP, however, we are able to revise it with a sandwich approximation to get avoid extreme bad happening and get a valid approximation factor.
\subsection{Sandwich Approximation}
Although the original objective function $f(\cdot)$ for IMCPP is non-submodular and non-supermodular, we have obtained a supermodular upper bound $\bar{f}(\cdot)$ and a submodular lower bound  $\underline{f}(\cdot)$ such that $\underline{f}(\cdot)\leq f(\cdot)\leq\bar{f}(\cdot)$. Then we apply the sandwich framework to design Algorithm \ref{alg2}.

\begin{algorithm}[!t]
	\caption{\textbf{Sandwich Approximation Framework}}
	\begin{algorithmic}[1]
		\Input  Graph $G$
		\Output Community partition $S^*$
		\State Let $S_U$ be $(1-1/e)$-approximation by continuous greedy to the upper bound $\bar{f}$.
		\State Let $S_L$ be $(1-1/e)$-approximation by continuous greedy to the lower bound $\underline{f}$.
		\State Let $S_A$ be a solution by greedy to the original problem $f$.
		\State $S^*=\arg\max_{S_0\in\{S_U,S_L,S_A\}}f(S_0)$.
		\State\Return $S^*$.
	\end{algorithmic}
	\label{alg2}
\end{algorithm}

The solution returned by the Sandwich approximation framework in Algorithm \ref{alg2} has a data-dependent approximation factor, which is presented in the following theorem, that is

\begin{theorem}
	Let $S^*$ be the community partition result returned by Algorithm \ref{alg2} and $S_A^*$ is the optimal solution maximizing the IMCPP, then we have
	\begin{equation*}
		f(S^*)\geq max\left\{\frac{f(S_U)}{\bar{f}(S_U)}, \frac{\underline{f}(S_L^*)}{{f}(S_A^*)} \right\}(1-\frac{1}{e})f(S_A^*)
	\end{equation*} 
	
\end{theorem}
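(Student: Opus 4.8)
The plan is to establish the two quantities inside the maximum separately, each via a short chain of inequalities, and then conclude by observing that $f(S^*)$ dominates every candidate solution. Keep the notation of the statement, writing $S_A^*$ for the optimum of $f$ and $S_L^*$ for the optimum of the lower bound $\underline{f}$, and introduce $S_U^*$ for the optimum of the upper bound $\bar{f}$. The three ingredients I would invoke repeatedly are: the sandwich relation $\underline{f}(\cdot)\le f(\cdot)\le\bar{f}(\cdot)$ (the right inequality is Theorem~\ref{th1}, the left one is the MIA construction of $\underline{f}$); the optimality of $S_U^*$ and $S_L^*$ for their own objectives; and the $(1-1/e)$ guarantees for $S_U$ and $S_L$ from Theorem~\ref{theorem777} and Theorem~\ref{theorem2}.

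For the upper-bound term I would begin from the trivial identity $f(S_U)=\tfrac{f(S_U)}{\bar{f}(S_U)}\,\bar{f}(S_U)$ and then drive $\bar{f}(S_U)$ downward in three steps: the approximation guarantee $\bar{f}(S_U)\ge(1-\tfrac{1}{e})\bar{f}(S_U^*)$, the optimality $\bar{f}(S_U^*)\ge\bar{f}(S_A^*)$, and the upper-bound property $\bar{f}(S_A^*)\ge f(S_A^*)$. Chaining these gives $f(S_U)\ge\tfrac{f(S_U)}{\bar{f}(S_U)}(1-\tfrac{1}{e})f(S_A^*)$. For the lower-bound term I would instead start from the lower-bound property $f(S_L)\ge\underline{f}(S_L)$, apply the guarantee $\underline{f}(S_L)\ge(1-\tfrac{1}{e})\underline{f}(S_L^*)$, and finally rewrite the right-hand side as $\tfrac{\underline{f}(S_L^*)}{f(S_A^*)}(1-\tfrac{1}{e})f(S_A^*)$ by multiplying and dividing by $f(S_A^*)$, yielding $f(S_L)\ge\tfrac{\underline{f}(S_L^*)}{f(S_A^*)}(1-\tfrac{1}{e})f(S_A^*)$.

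Since $S^*=\arg\max_{S_0\in\{S_U,S_L,S_A\}}f(S_0)$, we have $f(S^*)\ge\max\{f(S_U),f(S_L)\}$, and combining the two displayed bounds produces exactly the claimed inequality; note that the greedy candidate $S_A$ enters only through the $\max$, so it can only improve the guarantee and is never needed in the estimates. The argument is essentially routine inequality chaining, so I do not expect a genuine obstacle; the one point I would double-check is the bookkeeping of which optimum is used in each chain, in particular the two applications of the sandwich relation to the true optimum $S_A^*$ (the inequality $\bar{f}(S_A^*)\ge f(S_A^*)$ in the first chain, and implicitly $\underline{f}(S_L^*)\le f(S_L^*)\le f(S_A^*)$, which confirms the lower-bound ratio is well defined and at most one).
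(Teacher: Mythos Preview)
Your proposal is correct and follows essentially the same argument as the paper: the identical chain $f(S_U)=\tfrac{f(S_U)}{\bar{f}(S_U)}\bar{f}(S_U)\ge\tfrac{f(S_U)}{\bar{f}(S_U)}(1-\tfrac{1}{e})\bar{f}(S_U^*)\ge\tfrac{f(S_U)}{\bar{f}(S_U)}(1-\tfrac{1}{e})\bar{f}(S_A^*)\ge\tfrac{f(S_U)}{\bar{f}(S_U)}(1-\tfrac{1}{e})f(S_A^*)$ for the upper bound, the chain $f(S_L)\ge\underline{f}(S_L)\ge(1-\tfrac{1}{e})\underline{f}(S_L^*)$ rewritten via multiply-and-divide for the lower bound, and then the max over the three candidates. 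Your extra remark verifying that $\underline{f}(S_L^*)/f(S_A^*)\le 1$ is a nice sanity check not made explicit in the paper.
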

\begin{proof}
	Let $S_U^*$, $S^*_L$ and $S_A^*$ be the optimal solution to maximizing $\bar{f}$ and $\underline{f}$, $f$ for IMCPP. Then, we have
	\begin{equation*}
		\begin{aligned}
			f(S_U)&= \frac{f(S_U)}{\bar{f}(S_U)}{\bar{f}(S_U)}\geq\frac{f(S_U)}{\bar{f}(S_U)}(1-\frac{1}{e}){\bar{f}(S_U^*)}\\
			&\geq\frac{f(S_U)}{\bar{f}(S_U)}(1-\frac{1}{e}){\bar{f}(S_A^*)}\geq\frac{f(S_U)}{\bar{f}(S_U)}(1-\frac{1}{e}){{f}(S_A^*)}
		\end{aligned}
	\end{equation*} 
	We observe the lower bound, we can get the following equation:
	\begin{equation*}
		f(S_L)\geq \underline{f}(S_L)\geq(1-\frac{1}{e}){\underline{f}(S_L^*)}=\frac{\underline{f}(S_L^*)}{{f}(S_A^*)}(1-\frac{1}{e}){f}(S_A^*)
	\end{equation*} 
	Therefore, let $S^*=arg max_{S_0\in\{S_U,S_L,S_A\}}f(S_0)$, then
	\begin{equation*}
		\begin{aligned}
			f(S^*)&=max\{f(S_U),f(S_L),f(S_A)\}\\
			&\geq max\{f(S_U),f(S_A)\}\\ 
			&=max\left\{\frac{f(S_U)}{\bar{f}(S_U)}, \frac{\underline{f}(S_L^*)}{{f}(S_A^*)}\right\}(1-\frac{1}{e})f(S_A^*)
		\end{aligned}
	\end{equation*} 
	The theorem is proven.
\end{proof}

\section{Experiments}\label{experiments}
\subsection{Experiment Setup}
\textbf{Datasets}: We use two datasets which are from networkrepository.com to do the simulation, this website is an online network repository including different kinds of networks. Dataset 1 is a co-authorship network about scientists in the field of network theory and experiment. Dataset 2 is a Wiki-vote network, i.e. the Wikipedia who-votes-on-whom network. This dataset represents the voting relationship among  users. The details of the two datasets are mentioned in the Table \ref{table:1}.
\begin{table}
	\caption{Statistics of two datasets.}\label{table:1}
	\begin{center}
		\begin{tabular}{|c|c|c|c|}
			\hline
			Dateset & Nodes & Edges & Type\\
			\hline
			Dataset 1 & 379 & 914& directed\\
			\hline
			Dataset 2 &  914 & 2914& directed\\ 		
			\hline
		\end{tabular}
	\end{center}
\end{table}

\textbf{Influence Model}: This experiment is based on IC and LT model, the propagation probability of each directed edge $e$ is assigned as $p(e)=1/d(i)$, where $d(i)$ denotes the in-degree of a node $i$.  This setting method of $p(e)$ is widely used in previous literatures\cite{yang2016continuous}. In LT model, we need to generate a random number between 0 and 1 as a threshold which a node becomes active.

\textbf{Parameter Setting}: For the upper bound and lower bound, we set the time step $\Delta t=0.05$ in the continuous process. In the lower bound, $\theta$ is set as 0.1, so we eliminate maximum influence paths that the influence probabilities are less than 0.1. To estimate $w_{ij}(t)$ in the lower bound, we take 100 samples each time. To estimate the influence propagation function, the number of Monte Carlo simulation is set as 500 in all experiments.

\textbf{Comparison Methods}: To evaluate the effectiveness of the proposed algorithm, we compare the discrete continuous greedy algorithm with a random method, the Spit algorithm for Maximum K-Community Partition (SAMKCP) algorithm and Merge algorithm for Maximum K-Community Partition (MAM\\KCP) which are described in \cite{lu2014influence}. 

\textbf{Random}: It randomly partitions nodes to communities, which is a classical baseline algorithm.

\textbf{SAMKCP}: All the nodes belong to one community at first, then they spits on one of the communities recursively, which is a heuristic algorithm.

\textbf{MAMKCP}:  Each node belongs to a community, then pairs of communities are merged recursively as a new community, which is also a heuristic algorithm.
\subsection{Result Analysis}
To estimate the influence propagation, we extract sub-graph firstly at each step of the experiment, the process is mentioned in \cite{ni2020continuous}. 
Then do simulations in the next steps. 
\begin{figure}[htbp]
	\centerline
	{\includegraphics[width=11cm, height=4cm]{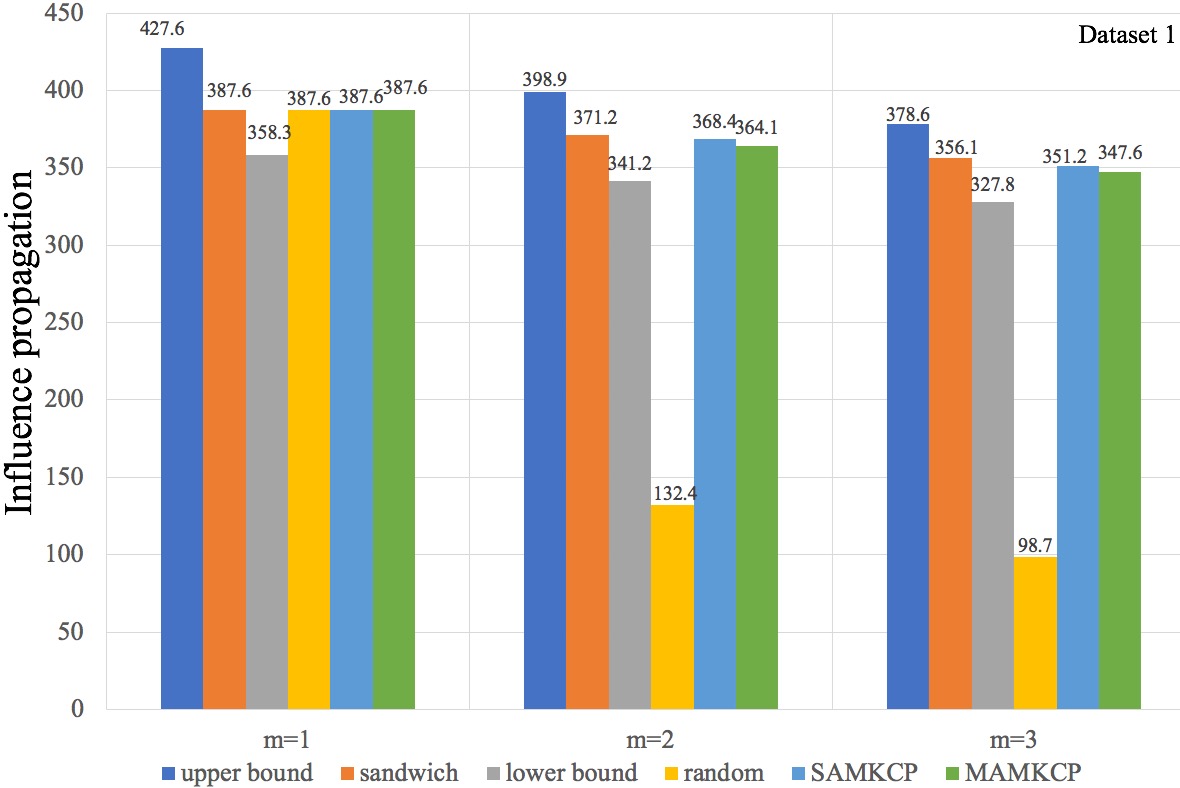}}
	\caption{Comparative results on dataset 1}\label{dataset1}
\end{figure}
\begin{figure}[htbp]
	\centerline{\includegraphics[width=11cm, height=4cm]{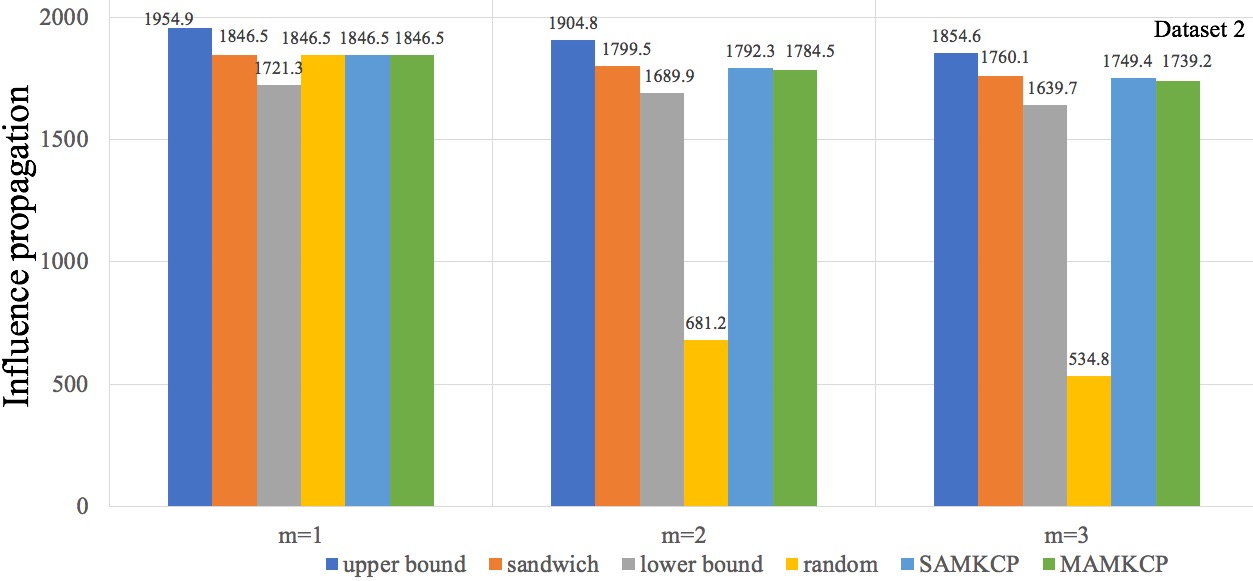}}
	\caption{Comparative results on dataset 2}\label{dataset2}
\end{figure}

\textbf{Varying the value of $m$ on dataset 1 with different methods}: The results in Figure \ref{dataset1} are done on dataset 1, we show the changing of influence propagation with the varying of the number of community partitioning $m$ with different methods. We can see that our sandwich method is clearly superior other three method except in the case where $m=1$, which is because that we do not need to partition community.
From Figure \ref{dataset1}, it is observed that the expected influence propagation of sandwich approximation framework lies in between its upper bound and lower bound for the dataset 1. In addition, we can see that the influence propagation of upper bound and lower bound is very close to the influence propagation of sandwich method, which shows that the upper bound and lower bound we found is good.

\textbf{Varying the value of $m$ on dataset 2 with different methods}: The results in Figure \ref{dataset2} are done on a larger dataset 2. The experiment results are shown in Figure \ref{dataset2}, which have the same changing trend of the bar chart with the results on dataset 1 in Figure \ref{dataset1} when the methods are different. This further verifies the correctness and validity of our sandwich method.

We can also see a regular from Figure \ref{dataset1} and Figure \ref{dataset2} that the influence propagation decreases with the increase of $m$ from 1 to 3, which is because that when a social network is partitioned into more communities, it reduces the influence propagation leaks out between the two parts. The result of our method is better than SAMKCP and MAMKCP, which shows that SAMKCP and MAMKCP have a lower computational complexity but also have some loss in performance.

\section{Conclusion}\label{conclusion}
In this paper, to address the community partition problem based on the influence maximization, we develop a lower bound and an upper bound of the objective function, and observe several useful properties of the lower bound and the upper bound. We design several        algorithms to solve the problem, that carries a data dependent approximation ratio. Simulation results on real social networks datasets demonstrate the correctness and superiority of our algorithms.

In future, we will do some research about applications based on community detection in social networks, such as community-based rumour blocking, community-based active friending, identifying the most influential nodes in social networks, which are worth studying topics.
\section*{Acknowledgment}
This work is supported by the National Natural Science Foundation of China (No.61772385, No.61572370).
\bibliographystyle{unsrt}
\bibliography{ref}
\end{document}